\newtheorem{theorem}{Theorem}
\newtheorem{assumption}[theorem]{Assumption}
\newtheorem{corollary}[theorem]{Corollary}
\newtheorem{definition}[theorem]{Definition}
\newtheorem{example}[theorem]{Example}
\newtheorem{notation}[theorem]{Notation}
\newtheorem{remark}[theorem]{Remark}
\newenvironment{proof}[1][Proof]{\textbf{#1.} }{\ \rule{0.5em}{0.5em}}
\begin{document}

\begin{center}
{\LARGE On Limits to the Scope of the \textit{Extended Formulations}
``Barriers''}{\Large \medskip\medskip}

Moustapha Diaby

OPIM Department; University of Connecticut; Storrs, CT 06268\\[0pt]moustapha.diaby@business.uconn.edu\bigskip

Mark H. Karwan

Department of Industrial and Systems Engineering; SUNY at Buffalo; Amherst, NY
14260\\[0pt]mkarwan@buffalo.edu\bigskip\ \ 
\end{center}

\textsl{Abstract}{\small : In this paper, we introduce the notion of
\textit{augmentation} for polytopes and use it to show the error in two
presumptions that have been key in arriving at over-reaching/over-scoped
claims of ``impossibility'' in recent \textit{extended formulations} (EF)
developments. One of these presumption is that: ``If Polytopes }${\small P}%
${\small  and }${\small Q}${\small  are described in the spaces of variables
}${\small x}${\small  and }${\small y}${\small  respectively, and there exists
a linear map }${\small x=Ay}${\small  between the feasible sets of
}${\small P}${\small  and }${\small Q}${\small , then }${\small Q}${\small  is
an EF of }${\small P}${\small ''. The other is: ``(An \textit{augmentation} of
Polytope }${\small A}$ {\small projects to Polytope }${\small B}${\small ) ==%
$>$
(The external descriptions of }${\small A}$ {\small and }${\small B}$
{\small are related)''. We provide counter-examples to these presumptions, and
show that in general: (1) If polytopes can \textit{always} be arbitrarily
\textit{augmented} for the purpose of establishing EF relations, then the
notion of EF becomes degenerate/meaningless in some cases, and that: (2) The
statement: ``(Polytope }${\small B}$ {\small is the projection of an
\textit{augmentation} of Polytope }${\small A}${\small ) ==%
$>$
(Polytope }${\small B}$ {\small is the projection of Polytope }${\small A}%
${\small )'' is not true in general (although, as we show, the converse
statement, ``(}${\small B}${\small  is the projection of }${\small A}%
${\small ) ==%
$>$
(}${\small B}${\small  is the projection of every augmentation of }%
${\small A}${\small )'', is true in general). We illustrate some of the ideas
using the \textit{minimum spanning tree problem}, as well as the ``lower
bounds'' developments in Fiorini\textit{\ et al.} (2011; 2012), in
particular.\bigskip}

\textsl{Keywords:}\textbf{\ }{\small Linear Programming; Combinatorial
Optimization; Traveling Salesman Problem; TSP; Computational Complexity,
Extended Formulations.}$\medskip$

\section{Introduction\label{Introduction_Section}}

There has been a renewed interest in Extended Formulations (EF's) over the
past 3 years (see Conforti \textit{et al}. (2010), Vanderbeck and Wolsey
(2010), Fiorini \textit{et al.} (2011; 2012), and Kaibel (2011), for example).
Despite the great importance of the EF paradigm in the analysis of linear
programming (LP) and integer programming (IP) models of combinatorial
optimization problems (COP's), the clear definition of its scope of
applicability has been largely an overlooked issue. The purpose of this paper
is to make a contribution towards addressing this issue. Specifically, we will
show that the notion of an EF can become \textit{ill-defined} and degenerate
(and thereby lose its meaningfulness) when it is being used to relate
polytopes involved in alternate abstractions of a given optimization problem.
Because most of the papers on EF's focus on the TSP specifically, we will
center our discussion on the TSP. However, the substance of the paper is
applicable for other NP-Complete problems.

It should be noted that, in this paper, we are not concerned with and do not
claim the correctness or incorrectness of any particular model that may have
been developed in trying to address the ``$P=NP$'' question. Our aim is,
strictly, to bring attention to limits to the scope within which EF Theory is
applicable when attempting to derive bounds on the size of the description of
a polytope.

The plan of the paper is as follows. First, in section
\ref{Background_Section}, we will review the basic definitions and notation
and show, in particular, the error of any notion there may be, suggesting an
``impossibility'' of abstracting the TSP optimization problem over a polytope
of polynomial size, by showing that TSP tours can be represented (by
inference) independently of the (standard) TSP polytope. Then, we will
introduce the notion of ``polyhedron augmentation'' in section
\ref{Polytope_Augmentation_Subsection}, and use it (in section
\ref{Ill_Dfn_Condition_Subsection}) to develop our results on the condition
about EF's becoming \textit{ill-defined}. Finally, we will offer some
concluding remarks in section \ref{Conclusions_Section}.\medskip

The general notation we will use is as follows.

\begin{notation}
\label{TSP_Gen'l_Notations}\ \ 
\end{notation}

\begin{enumerate}
\item $\mathbb{R}:$ \ Set of real numbers; \ 

\item $\mathbb{R}_{\mathbb{\nless}}:$ \ Set of non-negative real numbers;

\item $\mathbb{N}:$ \ Set of natural number;

\item $\mathbb{N}_{\mathbb{+}}:$ \ Set of positive natural numbers;

\item $``\mathbf{0}":$ \ Column vector that has every entry equal to $0$;

\item $``\mathbf{1}":$ \ Column vector that has every entry equal to $1$;

\item $Conv(\cdot):$ \ Convex hull of ($\cdot$).\medskip
\end{enumerate}

\section{Background overview\label{Background_Section}}

\subsection{Basic definitions\label{Basic_Dfns_SubSection}}

\begin{definition}
[``traditional $x$-variables'']\label{traditional x-variables}We will
generically refer to $2$-indexed variables that have been used in traditional
IP formulations of the TSP to represent inter-city travels as ``traditional
$x$-variables.'' In other words, $\forall(i,j)\in\{1,...,n\}^{2}:$ $i\neq j$,
we will refer to the $0/1$ decision variables that are such that the
$(i,j)^{th}$ entry of their vector is equal to ``$1$'' iff there is travel
from city $i$ to city $j$, as the ``traditional $x$-variables,'' irrespective
of the symbol used to denote them.
\end{definition}

\begin{definition}
[``Standard TSP Polytope'']\label{Standard_TSP_Polytope} Let $\mathcal{A}%
:=\left\{  (i,j)\in\Omega^{2}:i\neq j\right\}  $ denote the set of arcs of the
complete digraph on $\Omega.$ Denote the characteristic vector associated with
any $F\subseteq$ $\mathcal{A},$ by $x^{F}$ (i.e., $x_{ij}^{F}\in\{0,1\}$ is
equal to $1$\ iff $(i,j)\in F$). Assume (w.l.o.g.) that the TSP tours (defined
in terms of the arcs) have been ordered, and let $\mathcal{T}_{k}\subset$
$\mathcal{A}$ denote the $k^{th}$ tour. The ``Standard TSP Polytope'' (in the
asymmetric case) is defined as $Conv\left(  \left\{  x^{\mathcal{T}_{k}}%
\in\mathbb{R}^{n(n-1)},\text{ }k=1,\ldots,n!\right\}  \right)  $\ (see Lawler
\textit{et al}. (1985, pp. 257-258), and Yannakakis (1991, p. 441), among others).
\end{definition}

\begin{definition}
[``Standard EF Definition'' (Yannakakis\ (1991); Conforti \textit{et al}.
(2010; 2013))]\label{Extended_Polytope_Dfn}An ``extended formulation'' for a
polytope $X$ $\subseteq$ $\mathbb{R}^{p}$ is a polyhedron $U$ $=$ $\{(x,w)$
$\in$ $\mathbb{R}^{p+q}$ $:$ $Gx$ $+$ $Hw$ $\leq$ $g\}$ the projection,
$\varphi_{x}(U)$ $:=$ $\{x\in\mathbb{R}^{p}:$ $(\exists w\in\mathbb{R}^{q}:$
$(x,w) $ $\in$ $U)\},$ of which onto $x$-space is equal to $X$ (where $G$
$\in\mathbb{R}^{m\times p},$ $H\in\mathbb{R}^{m\times q},$ and $g\in
\mathbb{R}^{m}$).
\end{definition}

\begin{definition}
[``Alternate EF Definition \#1\textit{''} (Kaibel (2011); Fiorini \textit{et
al}. (2011; 2012))]\label{Extended_Polytope_Dfn2}A polyhedron $U$ $=$
$\{(x,w)$ $\in$ $\mathbb{R}^{p+q}$ $:$ $Gx$ $+$ $Hw$ $\leq$ $g\}$ is an
``extended formulation'' of a polytope $X$ $\subseteq$ $\mathbb{R}^{p}$ if
there exists a linear map $\pi$ $:$ $\mathbb{R}^{p+q}$ $\longrightarrow$
$\mathbb{R}^{p}$ such that $X $ is the image of $Q$ under $\pi$ (i.e.,
$X=\pi(Q)$; where $G\in\mathbb{R}^{m\times p}$, $H\in\mathbb{R}^{m\times q},$
and $g\in\mathbb{R}^{m}$).
\end{definition}

\begin{definition}
[``Alternate EF Definition \#2'' (Fiorini \textit{et al}. (2012))]%
\label{Extended_Polytope_Dfn3}An ``extended formulation'' of a polytope $X$
$\subseteq$ $\mathbb{R}^{p}$ is a linear system $U$ $=$ $\{(x,w)$ $\in$
$\mathbb{R}^{p+q}$ $:$ $Gx$ $+$ $Hw$ $\leq$ $g\}$ such that $x\in X$ if and
only if there exists $w\in\mathbb{R}^{q}$ such that $(x,w)\in U.$ (In other
words, $U$ is an EF of $X$ if $(x\in X\Longleftrightarrow(\exists$
$w\in\mathbb{R}^{q}:(x,w)\in U))$) (where $G$ $\in\mathbb{R}^{m\times p},$
$H\in\mathbb{R}^{m\times q},$ and $g\in\mathbb{R}^{m}$)$.$
\end{definition}

\begin{remark}
\label{EF_Dfns_Observations_Rmk}The following observations are in order with
respect to Definitions \ref{Extended_Polytope_Dfn},
\ref{Extended_Polytope_Dfn2}, and \ref{Extended_Polytope_Dfn3}:

\begin{enumerate}
\item The statement of $U$ in terms of inequality constraints only does not
cause any loss of generality, since each equality constraint can be replaced
by a pair of inequality constraints. (Yannakakis (1991, p. 442), for example)
just says that $U$ is a set of linear constraints.)

\item The statements ``$U$ is an \textit{extended formulation} of $X$'' and
``$U$ \textit{expresses} $X$'' (see Yannakakis (1991)) are equivalent.

\item The system of linear equations which specify $\pi$ in Definition
\ref{Extended_Polytope_Dfn2} must be \textit{valid} constraints for $X$ and
$U$. Hence, $X$ and $U$ can be respectively \textit{extended} by adding those
constraints to them, when trying to relate $X$ and $U$ using Definition
\ref{Extended_Polytope_Dfn2}. In that sense, Definition
\ref{Extended_Polytope_Dfn2} ``extends'' Definitions
\ref{Extended_Polytope_Dfn} and \ref{Extended_Polytope_Dfn3}.

\item All three definitions are equivalent when $G\neq\mathbf{0}$ and $U$ is
minimally-described. However, this is not true when $G=\mathbf{0,}$ as we will
show in section \ref{Ill_Dfn_Condition_Subsection} of this paper, causing a
condition of \textit{ill-definition}.

\item In the remainder of this paper, we will use the following terminologies
interchangeably: ``$A$ is an \textit{extended formulation} of $B$''; ``$A$ is
an \textit{extention }of $B$''; ``$A$ \textit{extends }$B$''; ``$B$ is
\textit{extended} by\textit{\ }$A$''. \ \ 
\end{enumerate}

\noindent$\square\medskip$
\end{remark}

\begin{remark}
\label{Conforti_Rmk}With respect to Definition \ref{Extended_Polytope_Dfn},
the following alternate definition of a projection is provided by Conforti
\textit{et al}. (2010; 2013):\medskip

Given a polyhedron $U$ $=$ $\{(x,w)$ $\in$ $\mathbb{R}^{p+q}$ $:$ $Gx$ $+$
$Hw$ $\leq$ $g\}$, its projection onto the $x$-space is $\varphi_{x}(U)=$
$\{x\in\mathbb{R}^{p}:uGx\leq ug$ for all $u\in C_{Q}\},$ where $C_{Q}%
:=\left\{  u\in\mathbb{R}^{m}:uH=\mathbf{0},u\geq\mathbf{0}\right\}  $.\medskip

Now, assume $G=\mathbf{0}$ in this and Definition \ref{Extended_Polytope_Dfn}.
Then, we would have:%
\[
\varphi_{x}(U)=\{x\in\mathbb{R}^{p}:\mathbf{0}\cdot x\leq ug\text{ for
all\ }u\in C_{Q}\}=\{x\in\mathbb{R}^{p}:ug\geq0\text{ }for\text{ all u}\in
C_{Q}\}.
\]
Hence, exactly one of the following would be true:
\begin{align*}
\varphi_{x}(U)  &  =\varnothing\text{ (if }ug<0\text{ for some }u\in
C_{Q}),\text{ or}\\[0.09in]
\varphi_{x}(U)  &  =\mathbb{R}^{p}\text{ (if }ug\geq0\text{ for all }u\in
C_{Q}).
\end{align*}
Hence, $\varphi_{x}(U)$ could not be equal to a nonempty polytope.
\ $\ \medskip$\newline $\square\medskip$
\end{remark}

\begin{definition}
[``Row-redundancy'']\label{Row_Redundancy_Dfn}Let $P:=\{x\in\mathbb{R}%
^{p}:Ax\leq a\},$ where $A\in\mathbb{R}^{m\times p}$ and $a\in\mathbb{R}^{m}.$

\begin{enumerate}
\item We say that $P$ has ``row-redundancy'' if there exists a (non-trivial)
row partitioning of $P$ with $A=\left[
\begin{tabular}
[c]{l}%
$\overline{A}_{1}$\\
$\overline{A}_{2}$%
\end{tabular}
\right]  ,$ and $a=\left[
\begin{tabular}
[c]{l}%
$\overline{a}_{1}$\\
$\overline{a}_{2}$%
\end{tabular}
\right]  $ (where $\overline{A}_{1}\in\mathbb{R}^{n\times p},$ $\overline
{A}_{2}\in\mathbb{R}^{(m-n)\times p},$ $\overline{a}_{1}\in\mathbb{R}^{n}$ and
$\overline{a}_{2}\in\mathbb{R}^{(m-n)})\ $such that one of the following
conditions is true:

\begin{enumerate}
\item $\left(  \overline{x}\in P\right)  \Longleftrightarrow\left(
\overline{x}\in\{x\in\mathbb{R}^{p}:\overline{A}_{1}x\leq\overline{a}%
_{1}\}\right)  ,$ or \medskip

\item $\left(  \overline{x}\in P\right)  \Longleftrightarrow\left(
\overline{x}\in\{x\in\mathbb{R}^{p}:\overline{A}_{2}x\leq\overline{a}%
_{2}\}\right)  $.
\end{enumerate}

\item We say that the constraints $\overline{A}_{2}x\leq\overline{a}_{2}$ are
``redundant'' for $\{x\in\mathbb{R}^{p}:\overline{A}_{1}x\leq\overline{a}%
_{1}\}$ if Condition ($1.a$) is true. Similarly, we say that the constraints
$\overline{A}_{1}x\leq\overline{a}_{1}$ are ``redundant'' for $\{x\in
\mathbb{R}^{p}:\overline{A}_{2}x\leq\overline{a}_{2}\}$ if Condition ($1.b$)
is true.$\medskip$
\end{enumerate}
\end{definition}

\begin{definition}
[``Column-redundancy'']\label{Column_Redundancy_Dfn}Let $P:=\{x\in
\mathbb{R}^{p}:Ax\leq a\},$ where $A\in\mathbb{R}^{m\times p}$ and
$a\in\mathbb{R}^{m}.$ Let $x$ denote the descriptive variables of $P.$ Let
$\left[
\begin{tabular}
[c]{l}%
$\overline{x}_{1}$\\
$\overline{x}_{2}$%
\end{tabular}
\right]  $ be a (non-trivial) partitioning of $x$, where $\overline{x}_{1}%
\in\mathbb{R}^{q},$ and $\overline{x}_{2}\in\mathbb{R}^{(p-q)}$.

\begin{enumerate}
\item We say that $P$ has ``column-redundancy'' if one of the following
conditions is true:\medskip

\begin{enumerate}
\item $\exists\left(  B_{1},b_{1}\right)  \in\mathbb{R}^{n\times q}%
\times\mathbb{R}^{n}:\medskip$

$\left(  \left[
\begin{tabular}
[c]{l}%
$\overline{\overline{x}}_{1}$\\
$\overline{\overline{x}}_{2}$%
\end{tabular}
\right]  \in Ext(P)\Longrightarrow\overline{\overline{x}}_{1}\in Ext\left(
\left\{  x\in\mathbb{R}^{q}:B_{1}x\leq b_{1}\right\}  \right)  \right.
,\ $and$\ \medskip$

$\left.  \overline{\overline{x}}_{1}\in Ext\left(  \left\{  x\in\mathbb{R}%
^{q}:B_{1}x\leq b_{1}\right\}  \right)  \Longrightarrow\exists\overline
{\overline{x}}_{2}\in\mathbb{R}^{(p-q)}:\left[
\begin{tabular}
[c]{l}%
$\overline{\overline{x}}_{1}$\\
$\overline{\overline{x}}_{2}$%
\end{tabular}
\right]  \in Ext(P)\right)  \medskip$

(where $1\leq n\leq m),$ or\medskip\medskip

\item $\exists\left(  B_{2},b_{2}\right)  \in\mathbb{R}^{n\times q}%
\times\mathbb{R}^{n}:\medskip$

$\left(  \left[
\begin{tabular}
[c]{l}%
$\overline{\overline{x}}_{1}$\\
$\overline{\overline{x}}_{2}$%
\end{tabular}
\right]  \in Ext(P)\Longrightarrow\overline{\overline{x}}_{2}\in Ext\left(
\left\{  x\in\mathbb{R}^{(p-q)}:B_{2}x\leq b_{2}\right\}  \right)  \right.
$,\ and \medskip

$\left.  \overline{\overline{x}}_{2}\in Ext\left(  \left\{  x\in
\mathbb{R}^{(p-q)}:B_{2}x\leq b_{2}\right\}  \right)  \Longrightarrow
\exists\overline{\overline{x}}_{1}\in\mathbb{R}^{q}:\left[
\begin{tabular}
[c]{l}%
$\overline{\overline{x}}_{1}$\\
$\overline{\overline{x}}_{2}$%
\end{tabular}
\right]  \in Ext(P)\right)  $ \medskip

(where $1\leq n\leq m).$
\end{enumerate}

\item We say that the variables $\overline{x}_{2}$ are ``redundant'' for
$\left\{  x\in\mathbb{R}^{q}:B_{1}x\leq b_{1}\right\}  $ when Condition
($1.a$) is true. Similarly, we say that variables $\overline{x}_{1}$ are
``redundant'' for $\left\{  x\in\mathbb{R}^{(p-q)}:B_{2}x\leq b_{2}\right\}  $
when Condition ($1.b$) is true.)\medskip
\end{enumerate}
\end{definition}

\begin{definition}
[``Minimally-described'' polytope]\label{Minimal_Description_Dfn}We say that a
polyhedron $P$ is ``minimally-described,'' or that (the statement of) $P$ is
``minimal,'' if $P $ has no \textit{row-redundancy} and no
\textit{column-redundancy}.
\end{definition}

\begin{assumption}
\label{Minimality_Assumption}In the remainder of this paper, with respect to
Definitions \ref{Extended_Polytope_Dfn}, \ref{Extended_Polytope_Dfn2}, and
\ref{Extended_Polytope_Dfn3}, we will assume (implicitly) that $U$ is
\textit{minimally-described} whenever we will be considering (or referring to)
the case in which $G\neq\mathbf{0.\medskip}$
\end{assumption}

Observation \ref{EF_Dfns_Observations_Rmk}.4 and Remark \ref{Conforti_Rmk}
above are the key point in the concept of \textit{ill-definition} of an
\textit{extended formulation} which occurs in the special of $G=\mathbf{0}$ in
Definitions \ref{Extended_Polytope_Dfn}, \ref{Extended_Polytope_Dfn2}, and
\ref{Extended_Polytope_Dfn3}. This allows for a \textit{barrier} to be removed
by using an alternate formulation for a given combinatorial optimization
problem (COP) at hand.\medskip

\subsection{The ``Alternate TSP Polytope'': An example of non-exponential
abstraction of TSP\ tours\label{AP_Formulation_of_TSP_Polytope_Section}}

We now introduce a non-exponential LP model which correctly abstracts TSP
tours. We will use this model as an illustrative example for our discussions
in section \ref{Ill_Definition_Section} addressing \textit{extension}
relations to the \textit{Standard TSP Polytope}, after we have formalized our
\textit{ill-definition} conditions in terms of the differences between
Definitions \ref{Extended_Polytope_Dfn}, \ref{Extended_Polytope_Dfn2}, and
\ref{Extended_Polytope_Dfn3}.

\begin{theorem}
\label{Correspondence_AP&Tours} \ Consider the TSP defined on the set of
cities $\Omega:=\{1,\ldots,n\}$. Assume city ``$1$'' has been designated as
the staring and ending point of the travels. Let $S:=\{1,\ldots,n-1\}$ denote
the \textit{times-of-travel} to cities ``$2$'' through ``$n$.'' Then, there
exists a one-to-one correspondence between TSP tours and extreme points of
\[
AP:=\left\{  w\in\mathbb{R}_{\mathbb{\nless}}^{(n-1)^{2}}:\sum\limits_{t\in
S}w_{i,t}=1\text{ \ }\forall i\in(\Omega\backslash\{1\});\text{ \ }%
\sum\limits_{i\in(\Omega\backslash\{1\})}w_{i,t}=1\text{ \ }\forall t\in
S\right\}  .
\]
\end{theorem}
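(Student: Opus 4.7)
My plan is to prove the theorem in three short steps, exploiting the fact that $AP$ is (a relabeling of) the classical $(n-1)\times(n-1)$ assignment polytope.

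First, I would invoke the Birkhoff--von Neumann theorem (or equivalently, the total unimodularity of the constraint matrix of $AP$) to assert that the extreme points of $AP$ are exactly the $0/1$ matrices $w\in\{0,1\}^{(n-1)\times(n-1)}$ satisfying the two sets of equations, i.e., the $(n-1)\times(n-1)$ permutation matrices. In particular, $|\mathrm{Ext}(AP)| = (n-1)!$.

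Second, I would construct the bijection explicitly. Since city $1$ is fixed as the start/end of every tour, each TSP tour is determined by, and determines, a permutation $\sigma : S \longrightarrow \Omega\setminus\{1\}$ that lists the order in which the remaining cities are visited at times $1,2,\ldots,n-1$. Define $\Phi$ by sending a tour with ordering $\sigma$ to the matrix $w^{\sigma}$ with entries
\[
w^{\sigma}_{i,t} \;=\; \begin{cases} 1 & \text{if } \sigma(t)=i,\\ 0 & \text{otherwise,}\end{cases} \qquad i\in\Omega\setminus\{1\},\ t\in S.
\]
Each $w^{\sigma}$ has exactly one $1$ in each row and in each column, hence satisfies both sets of assignment equations, is $0/1$, and is therefore an extreme point of $AP$ by the first step.

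Third, I would verify that $\Phi$ is a bijection. Injectivity is immediate since distinct orderings of the non-starting cities produce distinct matrices. Surjectivity follows because, given any extreme point $w$ of $AP$, the first step guarantees that $w$ is a permutation matrix, so one can read off a unique map $\sigma_w : S \longrightarrow \Omega\setminus\{1\}$ defined by $\sigma_w(t) = i \Longleftrightarrow w_{i,t}=1$, and this $\sigma_w$ describes a unique TSP tour (start at $1$, visit $\sigma_w(1),\sigma_w(2),\ldots,\sigma_w(n-1)$, return to $1$). Since $\Phi(\sigma_w) = w$, the map is onto.

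There is no real obstacle here: the only substantive ingredient is the characterization of extreme points of the assignment polytope, which is standard. The rest is a bookkeeping bijection between permutations of $\Omega\setminus\{1\}$ and tours fixing city $1$ as the start/end.
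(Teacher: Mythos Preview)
Your proposal is correct and follows essentially the same approach as the paper. The paper's own proof is in fact a one-line remark that the construction in each direction is ``trivial,'' so your version simply makes explicit the two ingredients the paper leaves implicit: the Birkhoff--von Neumann characterization of the extreme points of $AP$ as permutation matrices, and the obvious bijection between permutations of $\Omega\setminus\{1\}$ and tours anchored at city~$1$.
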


\begin{proof}
Using the assumption that node $1$ is the starting and ending point of travel,
it is trivial to construct a unique TSP tour from a given extreme point of
$AP,$ and vice versa (i.e., it is trivial to construct a unique extreme point
of $AP$ from a given TSP tour). \ \medskip
\end{proof}

\begin{corollary}
\label{Contradiction_Coroll}It follows directly from Theorem
\ref{Correspondence_AP&Tours} that $AP$ is a contradiction of any notion
whereby it is ``impossible'' to abstract the TSP polytope into a linear
program of polynomial size, since $AP$ clearly has polynomial (linear) size
and it is a well-established fact that $AP$ is integral (see Burkard
\textit{et al}. (2009)).
\end{corollary}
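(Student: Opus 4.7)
The plan is to derive the corollary as a direct consequence of Theorem \ref{Correspondence_AP&Tours} combined with two elementary facts about the assignment polytope. The underlying logic is simply that any blanket ``impossibility'' claim about representing the TSP tour set by a polynomial-sized polytope can be refuted by exhibiting a single concrete counter-example; $AP$ is such an example.

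First I would verify the size claim: $AP$ is described by $(n-1)^{2}$ nonnegativity constraints together with the $2(n-1)$ row-sum and column-sum equalities, so its encoding length is $O(n^{2})$, i.e., polynomial in $n$. Next, I would invoke the classical Birkhoff--von Neumann result (cited in the statement via Burkard \textit{et al.}\ (2009)) that every extreme point of a doubly-stochastic polytope of this form is a $0/1$ permutation matrix; hence $AP$ is integral and its vertex set consists exactly of the $(n-1)!$ permutation matrices on $(\Omega \backslash \{1\}) \times S$. Finally, I would plug these in to Theorem \ref{Correspondence_AP&Tours}, which already provides a bijection between those extreme points and the set of all TSP tours through $\Omega$ starting and ending at city $1$. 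Combining the three observations yields a polyhedron of polynomial description complexity whose extreme points are in one-to-one correspondence with the TSP tours, which is precisely the content of the corollary.

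The main obstacle I anticipate is not technical but conceptual, and it concerns what exactly is being claimed ``impossible.'' A reader could object that the $w_{i,t}$ variables index cities by \emph{time-of-travel} rather than by \emph{arc}, so $AP$ is not an extended formulation of the polytope in Definition \ref{Standard_TSP_Polytope} in the usual linear-projection sense; and, separately, that the standard edge-length TSP objective is not linear in the $w_{i,t}$ alone. The first objection is in fact the whole thematic point of the paper --- representing TSP tours by a polynomial polytope and having a polynomial EF of the Standard TSP Polytope are \emph{not} the same statement, and the conflation of the two is exactly the slippage Remark \ref{EF_Dfns_Observations_Rmk} and the forthcoming section \ref{Ill_Dfn_Condition_Subsection} warn against. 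The second objection is outside the scope of the corollary, which speaks to the representability of the tour set rather than to linear optimization of the TSP cost function, so I would simply flag it as a distinct issue to be taken up elsewhere in the paper.
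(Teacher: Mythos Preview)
Your proposal is correct and follows exactly the reasoning the paper uses: the corollary in the paper carries no separate proof environment, its justification being embedded in the statement itself (polynomial size of $AP$, integrality via Burkard \textit{et al.}, and the bijection of Theorem~\ref{Correspondence_AP&Tours}), and you have simply unpacked those three ingredients explicitly. Your additional paragraph anticipating the ``time-of-travel vs.\ arc'' and ``objective-not-linear-in-$w$'' objections is well placed and mirrors precisely the points the paper goes on to make in Remarks~\ref{Alternate_Abstractions_Rmk} and~\ref{Alternate_TSP_Polytope_Rmk}.
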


\begin{remark}
\label{Alternate_Abstractions_Rmk}Hence, alternate abstractions of the TSP
optimization problem which may or may not involve the \textit{Standard TSP
Polytope} are possible. For example, in the ``standard'' (i.e.,
\textit{Standard TSP Polytope}-based) abstraction of the TSP optimization
problem, the cost function is trivial to develop. The challenge is to come up
with linear constraints so that the extreme points of the induced polytope are
TSP tours. On the other hand, in an abstraction based on $AP,$ the
representation of the TSP tours is a straightforward matter (since the tours
are abstracted into assignment problem (bipartite matching) solutions). The
challenge is to find appropriate costs to apply for these thus-abstracted TSP
tours. Clearly, this challenge of coming up with a cost function is not within
the scope of EF developments for the \textit{Standard TSP Polytope}, since it
does not involve that polytope. Examples of how this challenge can be met are
described in Diaby (2007) and in Diaby and Karwan (2012), respectively. Also,
clearly, from an overall perspective, one cannot reasonably equate an
``impossibility'' of meeting the challenge in one of the two abstractions
(i.e., the ``\textit{Standard TSP Polytope}-based'' and ``$AP$-based''
abstractions) with an ``impossibility'' of meeting the challenge in the other.
\ \ $\square$
\end{remark}

\begin{remark}
\label{Alternate_TSP_Polytope_Rmk}More formally, clearly, $AP$\ is also a TSP
polytope, since its extreme points correspond to TSP tours. According to the
\textit{Minkowski-Weyl Theorem} (Minskowski (1910); Weyl (1935); see also
Rockafellar (1997, pp.153-172)), every polytope can be equivalently described
as the intersection of hyperplanes ($\mathcal{H}$-representation/external
description) or as a convex combination of (a finite number of) vertices
($\mathcal{V}$-representation/internal description). The \textit{Standard TSP
Polytope} is stated in terms of its $\mathcal{V}$-representation. No
polynomial-sized $\mathcal{H}$-representation of it is known. On the other
hand, $AP$ is stated in terms of its $\mathcal{H}$-representation (which is
well-known to be of (low-degree) polynomial size (see Burkard \textit{et al}.
(2009)), but it is trivial to state its $\mathcal{V}$-representation also. The
vertices of $AP$\ are assignment problem solutions, whereas the vertices of
the \textit{Standard TSP Polytope} are Hamiltonian cycles.\textit{\ }Hence,
even though the extreme points of $AP$ and those of the \textit{Standard TSP
Polytope} respectively correspond to TSP tours, the two sets of extreme points
are different kinds of mathematical objects, with un-related mathematical
characterizations. Hence, there does not exist any \textit{a priori}
mathematical relation between $AP$\ and the \textit{Standard TSP Polytope. }In
other words, $AP$\ and the \textit{Standard TSP Polytope} are simply alternate
abstractions of TSP tours. Or, put another way, $AP$ is (simply) an alternate
TSP polytope from the \textit{Standard TSP Polytope}, and vice versa (i.e.,
that the \textit{Standard TSP Polytope} is (simply) an alternate TSP polytope
from $AP$). \ \ $\square$
\end{remark}

\begin{definition}
[``Alternate TSP Polytope'']\label{Alternate_Polytope_Dfn}We refer to $AP$ as
the ``Alternate TSP Polytope.''
\end{definition}

\section{\textit{Ill-definition c}ondition for ``Extended
Formulations''\label{Ill_Definition_Section}}

\subsection{Polytope Augmentation\label{Polytope_Augmentation_Subsection}}

\begin{definition}
[``Class of variables'']\label{Class_Of_Variables_Dfn}We refer to a set of
variables which model a given aspect of a problem, as a ``class of
variables.'' The \textit{traditional }$\mathit{x}$\textit{-variables} for
example, would constitute one \textit{class of variables in a TSP model}, as
they represent (single) ``travel legs'' in the TSP. Similarly, the $y$- and
$z$-variables used in the models of Diaby (2007) and Diaby and Karwan (2012),
respectively, would constitute two distinct \textit{classes of variables,}
with the $y$-variables modeling doublets of ``travel legs'' in the TSP, and
the $z$-variables modeling triplets of ``travel legs'' in the TSP.
\end{definition}

\begin{assumption}
In the remainder of this paper, we will assume (without loss of generality)
that a given \textit{class of variables} is denoted by the same symbol in all
of the models in which it is used. That is, we will assume that the same
notation (whatever that may be) will be used to designate the
\textit{traditional }$\mathit{x}$\textit{-variables} for example, in every
model in which these variables are used.
\end{assumption}

\begin{definition}
[``Independent spaces'']\label{Independent_Spaces_Dfn}Let $x\in\mathbb{R}%
^{p}\ (p\in\mathbb{N}_{+}) $ and $w\in\mathbb{R}^{q}$ $(q\in\mathbb{N}_{+})$
be the vectors of descriptive variables for two polyhedra in $\mathbb{R}^{p}$
and $\mathbb{R}^{q},$ respectively.

\begin{enumerate}
\item We say that $x$ and $w$ (or that the polyhedra) are in ``independent
spaces\textit{''} if $x$ and $w$ do not have any \textit{class of variables}
in common. That is, we say that $x$ and $w$ are in ``independent
spaces\textit{''} if the following conditions holds:

\begin{enumerate}
\item $x$ cannot be partitioned as $x=\dbinom{\overline{x}}{w};$

\item $w$ cannot be partitioned as $w=\dbinom{\overline{w}}{x};$ and

\item $\forall m\in\mathbb{N}_{+}:$ $m<\min\{p,q\},$ $\nexists(\overline
{x},\overline{w},v)\in\mathbb{R}^{(p-m)}\times\mathbb{R}^{(q-m)}%
\times\mathbb{R}^{m}:(x$ and $w$ can be respectively partitioned as
$x=\dbinom{\overline{x}}{v}$ and $w=\dbinom{\overline{w}}{v}),$ where $v$
denotes a given \textit{class of variables} for the problem at hand.
\end{enumerate}

\item We will say that $x$ and $w$ (or that the polyhedra they respectively
describe) ``overlap'' if $x$ and $w$ have one or more \textit{classes of
variables} in common.
\end{enumerate}
\end{definition}

Regan and Lipton (2013) remarked that all polytopes may be viewed, in a
degenerate way, as being part of one overall multi-dimensional space. The
following alternate (and equivalent) definition of ``independent spaces'' is
therefore useful in further clarifying the notion.

\begin{definition}
[Alternate definition of ``Independent spaces'']%
\label{Independent_Spaces_Alternate_Dfn}Let $P$ and $Q$ be polytopes in
$\mathbb{R}^{p+q},$ with descriptive variables $(x,y)\in\mathbb{R}^{p}%
\times\mathbb{R}^{q}.$ We say that $P$ and $Q$ are in ``independent spaces''
iff exactly one of the following two conditions holds:

\begin{enumerate}
\item $\left\{  x\in\mathbb{R}^{p}:\left(  \exists y\in\mathbb{R}^{q}:(x,y)\in
P\right)  \right\}  =\mathbb{R}^{p}$ \ and \ $\left\{  y\in\mathbb{R}%
^{q}:\left(  \exists x\in\mathbb{R}^{p}:(x,y)\in Q\right)  \right\}
=\mathbb{R}^{q};$

\item $\left\{  y\in\mathbb{R}^{q}:\left(  \exists x\in\mathbb{R}^{p}:(x,y)\in
P\right)  \right\}  =\mathbb{R}^{q}$ \ and \ $\left\{  x\in\mathbb{R}%
^{p}:\left(  \exists y\in\mathbb{R}^{q}:(x,y)\in Q\right)  \right\}
=\mathbb{R}^{p}.\smallskip$
\end{enumerate}
\end{definition}

\begin{example}
\label{Indep_Spaces_Illustr}Definitions \ref{Independent_Spaces_Dfn} and
\ref{Independent_Spaces_Alternate_Dfn} can be illustrated as follows.

\begin{itemize}
\item Assume $x\in\mathbb{R}^{2}$ and $y\in\mathbb{R}^{2}$ refer to different
\textit{classes of variables} in a modeling context at hand.

\item Let $x$ and $y$ be the descriptive variables for Polytopes $P$ and $Q$
respectively, with:%
\begin{align*}
P  &  :=\{x\in\mathbb{R}^{2}:x_{1}-x_{2}\geq6;\text{ \ }0\leq x_{1}%
\leq6;\text{ \ }0\leq x_{2}\leq5\};\\[0.09in]
Q  &  :=\{y\in\mathbb{R}^{2}:y_{1}+y_{2}=6;\text{ \ }y_{1}\geq1.5;\text{
\ }y_{2}\geq0\}.
\end{align*}

\item Clearly, $P$ and $Q$ are in $\mathbb{R}^{4}$ in a degenerate sense, respectively.

\item However:

\begin{itemize}
\item $P$ and $Q$ are \textit{independent spaces a}ccording to Definition
\ref{Independent_Spaces_Dfn} directly;

\item $P$ and $Q$ can be respectively re-written as:\newline $P^{\prime
}=\left\{  (x,y)\in\mathbb{R}^{2}\times\mathbb{R}^{2}:\mathbf{A}%
x+\mathbf{0}\cdot y\leq\mathbf{a}\right\}  $; where $\mathbf{A}=\left[
\begin{array}
[c]{cc}%
-1 & 1\\
-1 & 0\\
1 & 0\\
0 & -1\\
0 & 1
\end{array}
\right]  ,$ $\mathbf{a}=\left[
\begin{array}
[c]{c}%
-6\\
0\\
6\\
0\\
5
\end{array}
\right]  ,$ \ and \newline $Q^{\prime}:=\left\{  (x,y)\in\mathbb{R}^{2}%
\times\mathbb{R}^{2}:\mathbf{0}\cdot x+\mathbf{B}y\leq\mathbf{b}\right\}  $;
where $\mathbf{B}=\left[
\begin{array}
[c]{cc}%
1 & 1\\
-1 & -1\\
-1 & 0\\
0 & -1
\end{array}
\right]  ,$ $\mathbf{b}=\left[
\begin{array}
[c]{c}%
6\\
-6\\
-1.5\\
0
\end{array}
\right]  ,$

so that:\medskip\medskip\newline $\left\{  y\in\mathbb{R}^{2}:\left(  \exists
x\in\mathbb{R}^{2}:(x,y)\in P^{\prime}\right)  \right\}  =\mathbb{R}^{2}$
\ and \ $\left\{  x\in\mathbb{R}^{2}:\left(  \exists y\in\mathbb{R}%
^{2}:(x,y)\in Q^{\prime}\right)  \right\}  =\mathbb{R}^{2}.\medskip\medskip$

Hence, $P^{\prime}$ and $Q^{\prime}$ (and therefore, $P$ and $Q)$ are
\textit{independent spaces }according to Definition
\ref{Independent_Spaces_Alternate_Dfn}.
\end{itemize}
\end{itemize}

\noindent$\square$ \ 
\end{example}

\begin{definition}
[``Polyhedron augmentation'']\label{Polytope_Augmentation_Dfn}Let $X$ be a
non-empty polyhedron described in terms of variables $x\in\mathbb{R}^{p}$. Let
$\overline{X}$ be a polyhedron the description of which consists of the
constraints of $X,$ plus additional variables and constraints that are not
used in the description of $X$. We will say that $\overline{X}$ is an
``augmentation'' of $X$ (or that $\overline{X}$ ``augments'' $X)$ if the
problem of optimizing any given linear function of $x$ over $X,$ is equivalent
to the problem of optimizing that linear function over $\overline{X}$. In
other words, let $x\in\mathbb{R}^{p}$ and $y\in\mathbb{R}^{q}$ be vectors of
variables in \textit{independent spaces}. Let $X:=\{x\in\mathbb{R}^{p}:Ax\leq
a\}\neq\varnothing,$ and $\overline{X}:=\{(x,y)\in\mathbb{R}^{p+q}:Ax\leq a;$
$Bx+Cy\leq b\}\neq\varnothing$ (where $A\in$ $\mathbb{R}^{k\times p},$
$a\in\mathbb{R}^{k},$ $B\in$ $\mathbb{R}^{m\times p},$ $C\in$ $\mathbb{R}%
^{m\times q},$ and $b\in\mathbb{R}^{m}$). We say that $\overline{X}$
\textit{augments} $X$ if $(\forall x\in X,$ $\exists y\in\mathbb{R}^{q}:$
$(x,y)\in\overline{X}).$\ 
\end{definition}

\begin{remark}
\label{Polytope_Augmentation_Rmk1}With respect to Definition
\ref{Polytope_Augmentation_Dfn}:

\begin{enumerate}
\item The additional variables and constraints of $\overline{X}$
are\textit{\ redundant} for $X$ (see Definitions \ref{Row_Redundancy_Dfn} and
\ref{Column_Redundancy_Dfn}); \ 

\item The optimization problem over $\overline{X}$ may not be equivalent to
the optimization problem over $X$, if the objective function in the problem
over $\overline{X}$ is changed from that of $X$ to include non-zero terms of
the new variables;

\item Every augmentation of $X$ is an \textit{extended formulation} of $X$,
but the converse is not true (since an \textit{extended formulation} of $X$
need not include the constraints of $X$ explicitly);

\item The polyhedral set associated to an optimization problem is equivalent
to all of its \textit{augmentations} respectively, provided the expression of
the objective function of the optimization problem is not changed;

\item In the discussions to follow we will assume (w.l.o.g.) that the
objective function is not changed when new variables and constraints are added
to an optimization problem. Hence, in the discussions to follow, we will not
distinguish between a polyhedral set and the optimization problem to which it
is associated, except for where that causes confusion.
\end{enumerate}

\noindent$\square$
\end{remark}

\begin{example}
\label{Polytope_Augmentation_Example} \ We illustrate Definition
\ref{Polytope_Augmentation_Dfn} and \ Remark \ref{Polytope_Augmentation_Rmk1}
as follows. \medskip

Let:

\begin{description}
\item $(i)$ $x\in\mathbb{R}^{p}$ and $y\in\mathbb{R}^{q}$ be variables in
\textit{independent spaces};

\item $(ii)$ $X:=\{x\in\mathbb{R}^{p}:Ax\leq a\};$

\item $(iii)$ $L:=\{(x,y)\in\mathbb{R}^{p+q}:Bx+Cy\leq c\}$;

\item $(iv)$ $Y:=\{y\in\mathbb{R}^{q}:Dy\leq d\};$

\item $(v)$ $K_{1}:=\left\{  (x,y)\in\mathbb{R}^{p+q}:\left[
\begin{array}
[c]{cc}%
A & \mathbf{0}\\
B & C
\end{array}
\right]  \left[
\begin{array}
[c]{c}%
x\\
y
\end{array}
\right]  \leq\left[
\begin{array}
[c]{c}%
a\\
c
\end{array}
\right]  \right\}  ;$

\item $(vi)$ $K_{2}:=\left\{  (x,y)\in\mathbb{R}^{p+q}:\left[
\begin{array}
[c]{cc}%
B & C\\
\mathbf{0} & D
\end{array}
\right]  \left[
\begin{array}
[c]{c}%
x\\
y
\end{array}
\right]  \leq\left[
\begin{array}
[c]{c}%
c\\
d
\end{array}
\right]  \right\}  ;$

\item $(vii)$ $K_{3}:=\left\{  (x,y)\in\mathbb{R}^{p+q}:\left[
\begin{array}
[c]{cc}%
A & \mathbf{0}\\
B & C\\
\mathbf{0} & D
\end{array}
\right]  \left[
\begin{array}
[c]{c}%
x\\
y
\end{array}
\right]  \leq\left[
\begin{array}
[c]{c}%
a\\
c\\
d
\end{array}
\right]  \right\}  $. $\ $
\end{description}

\noindent(Where: $\ A\in\mathbb{R}^{k\times p};$ $a\in\mathbb{R}^{k};$
$B\in\mathbb{R}^{m\times p};$ $C\in\mathbb{R}^{m\times q};$ $c\in
\mathbb{R}^{m};$ $D\in\mathbb{R}^{l\times q},$ $d\in\mathbb{R}^{l}%
$)$.\smallskip$

Assume:

\begin{description}
\item $(vii)$ $A\neq\mathbf{0},$ $B\neq\mathbf{0},$ $C\neq\mathbf{0},$
$D\neq\mathbf{0};$

\item $(viii)$ $B$ cannot be partitioned as $B=\left[
\begin{array}
[c]{c}%
A\\
\overline{B}%
\end{array}
\right]  ;$

\item $(ix)$ $\ C$ cannot be partitioned as $C=\left[
\begin{array}
[c]{c}%
\overline{C}\\
D
\end{array}
\right]  ;$

\item $(x)$ \ \ The constraints of $L$ are redundant for $X$ and $Y$.\smallskip
\end{description}

Then:

\begin{description}
\item $(xi)$ $K_{1}$ is an \textit{augmentation} of $X$, but not of $L$, nor
of $Y$;

\item $(xii)$ $K_{2}$ is an \textit{augmentation} of $Y$, but not of $L,$ nor
of $X$;

\item $(xiii)$ $K_{3}$ is an \textit{augmentation} of $X$ and $Y$, but not of
$L$;

\item $(xiv)$ $K_{1}$ is an \textit{extended formulation} of $X$, but not of
$L,$ and may or may not be for $Y;$

\item $(xv)$ $K_{2}$ is an \textit{extended formulation} of $Y$, but not of
$L,$ and may or may not be for $X$;

\item $(xvi)$ $K_{3}$ is an \textit{extended formulation} of $X$ and $Y$, but
not of $L$;

\item $(xvii)$ $L$ is not an \textit{augmentation} of $X$ nor of $Y;$

\item $(xviii)$ $L$ may or may not be an \textit{extended formulation} of $X$;

\item $(xix)$ $L$ may or may not be an \textit{extended formulation} of $Y.$
\end{description}

\noindent$\square$ \ 
\end{example}

\begin{remark}
\label{Polytope_Augmentation_Rmk2}In reference to the developments above:

\begin{enumerate}
\item We will refer to the constraints of $L$ as the ``linking constraints''
(for $X$ and $Y$)\ in $K_{3},$ regardless of whether or not the constraints of
$L$ are redundant for $X$ and $Y$;

\item If $X$ and $Y$ are alternative correct abstractions of the requirements
of some (same) given optimization problem, then there may or may not exist $B$
and $C$ such that $((x,y)\in K_{1}$ $\Longrightarrow y\in Y)$\ and $((x,y)\in
K_{2}$ $\Longrightarrow x\in X)$. This is exemplified by the \textit{Alternate
TSP Polytope} relative to the \textit{Standard TSP Polytope};

\item If there exist $B$ and $C$ such that $((x,y)\in K_{1}$ $\Longrightarrow
y\in Y)$\ and $((x,y)\in K_{2}$ $\Longrightarrow x\in X),$ then it must be
possible to attach meanings to $x$ and $y$, so that $X$ and $Y$ are
alternative correct abstractions of the requirements of some (same) given
optimization problem. This is exemplified by the LP models of the TSP in Diaby
(2007) and in Diaby and Karwan (2012), respectively, relative to the
\textit{Alternate TSP Polytope}, or relative to the \textit{Standard TSP
Polytope};

\item The main point of our developments in section
\ref{Ill_Definition_Section} below will be to show that there exists no
\textit{we\textit{ll-defined}} (non-ambiguous, meaningful) \textit{extended
formulation} relationship between $X$ and $Y$.

In particular, we will show that
\[
(\exists(B,C):(x,y)\in K_{1}\Longrightarrow y\in Y)\nRightarrow(X\text{ is a
\textit{we\textit{ll-defined}} \textit{extended formulation} of }Y)\text{,\ }%
\]
and that similarly,
\[
(\exists(B,C):(x,y)\in K_{2}\Longrightarrow x\in X)\nRightarrow(Y\text{ is a
\textit{we\textit{ll-defined}} \textit{extended formulation} of }X)\text{.\ }%
\]
For example, there exist linear transformations which allow for points of the
\textit{Alternate TSP Polytope} to be ``retrieved'' from (given) solutions of
the TSP LP models in Diaby [2007] and Diaby and Karwan [2012]. Note however,
that the ``retrieval'' of points of the \textit{Standard TSP\ Polytope} can be
accomplished only through the use of ``implicit'' information (about TSP node
``$1$'' specifically) that is outside the scope of the TSP LP models per se.
Hence, the TSP LP models can be \textit{we\textit{ll-defined} extended
formulations} of the \textit{Standard TSP Polytope} only if they are
\textit{we\textit{ll-defined} extended formulations of }the \textit{Alternate
TSP Polytope}, which would seem to suggest that the \textit{Alternate TSP
Polytope} must be a \textit{we\textit{ll-defined} extended formulation} of the
\textit{Standard TSP polytope}. We do not believe such a suggestion is the
intent of any \textit{extended formulations }work. However, we believe the
definitions of an ``extended formulation'' must be properly interpreted in
order for them not to lead to such conclusions. Specifically, using the notion
of \textit{augmentation} discussed in this section, we will show in the next
section (section \ref{Ill_Definition_Section}) that the notion of an EF can
become \textit{ill-defined} (and thereby lose its meaningfulness) when the
polytopes being related are expressed in coordinate systems that are
independent of each other.
\end{enumerate}

\noindent$\square$\medskip
\end{remark}

We will now discuss two results which will be helpful subsequently in showing
the differences between the case of polytopes in \textit{overlapping spaces}
and the case of polytopes in \textit{independent spaces}, as pertains to
\textit{extension} relationships.

\begin{theorem}
\label{Overlapg_Polytopes_Thm} Let $P_{1}$ and $P_{2}$ be non-empty,
\textit{minimally-described} polytopes in \textit{overlapping spaces} with the
set of the descriptive variables of $P_{1}$ included in that of $P_{2}$. An
\textit{augmentation} of $P_{2}$ is an \textit{extended formulation }of
$P_{1}$ if and only if $P_{2}$ is an \textit{extended formulation }of $P_{1},$
according to Definitions \ref{Extended_Polytope_Dfn},
\ref{Extended_Polytope_Dfn2}, and \ref{Extended_Polytope_Dfn3}, respectively.\medskip

In other words, let:%
\[
P_{1}:=\left\{  x\in\mathbb{R}^{q_{1}}:A_{1}x\leq a_{1}\right\}  \text{ (where
}A_{1}\in\mathbb{R}^{r_{1}\times q_{1}};\text{ }a_{1}\in\mathbb{R}^{r_{1}});
\]%
\[
P_{2}:=\left\{  (x,u)\in\mathbb{R}^{q_{1}+q_{2}}:A_{2}x+Bu\leq b\right\}
\text{ }(\text{where: }A_{2}\in\mathbb{R}^{r_{2}\times q_{1}};\text{ }%
B\in\mathbb{R}^{r_{2}\times q_{2}};\text{ }b\in\mathbb{R}^{r_{2}}).
\]

Assume $A_{1}\neq\mathbf{0},$ $A_{2}\neq\mathbf{0},$ $P_{1}\neq\varnothing,$
and $P_{2}\neq\varnothing.$ Then, an arbitrary \textit{augmentation}, $P_{3},$
of $P_{2}$ is an \textit{extended formulation} of $P_{1}$ if and only if
$P_{2}$ is an \textit{extended formulation} of $P_{1},$ according to
Definitions \ref{Extended_Polytope_Dfn}, \ref{Extended_Polytope_Dfn2}, and
\ref{Extended_Polytope_Dfn3}, respectively.
\end{theorem}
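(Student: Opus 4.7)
The plan is to reduce the biconditional to a single observation: that augmentation preserves the $(x,u)$-projection, and hence the $x$-projection, of $P_2$. I would start by writing an arbitrary augmentation of $P_2$ in the form $P_3 = \{(x,u,v) \in \mathbb{R}^{q_1+q_2+q_3} : A_2 x + Bu \leq b;\; Cx + Du + Ev \leq e\}$, where $v$ is a class of variables in a space independent of $(x,u)$ and, by Definition \ref{Polytope_Augmentation_Dfn}, every $(x,u) \in P_2$ admits some $v$ with $(x,u,v) \in P_3$. The first bookkeeping step is simply to record this parametrization and note that the first block of $P_3$'s constraints is literally the description of $P_2$.

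The core step is to prove that the $(x,u)$-projection of $P_3$ equals $P_2$. The forward inclusion is immediate from the fact that $A_2 x + Bu \leq b$ appears among the defining constraints of $P_3$; the reverse inclusion is precisely the lifting property built into Definition \ref{Polytope_Augmentation_Dfn}. Composing with the $x$-projection then yields $\mathrm{proj}_x(P_3) = \mathrm{proj}_x(P_2)$, which instantly gives the biconditional under Definition \ref{Extended_Polytope_Dfn}. The same equality chains through the quantifier form of Definition \ref{Extended_Polytope_Dfn3} via the equivalence $(\exists (u,v) : (x,u,v) \in P_3) \Longleftrightarrow (\exists u : (x,u) \in P_2)$, which again uses only the augmentation lifting property and the inclusion of $P_2$'s constraints in $P_3$.

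Handling Definition \ref{Extended_Polytope_Dfn2} will be the main obstacle, because the linear map $\pi$ permitted there is not restricted to be a coordinate projection and could in principle mix the $v$-coordinates into the image. The forward direction is still easy: given $P_1 = \pi(P_2)$ via some linear map $\pi$, the composition of $\pi$ with the coordinate projection $(x,u,v) \mapsto (x,u)$ exhibits $P_1$ as the image of $P_3$ under a linear map, so $P_3$ is an EF of $P_1$. The delicate direction is the converse. I would address it by invoking Remark \ref{EF_Dfns_Observations_Rmk}.3: the linear equalities implicit in $\pi$ must be valid constraints both for the EF and for its target, so under Assumption \ref{Minimality_Assumption} they can be absorbed into the defining systems of $P_2$ and $P_3$ without altering these polytopes. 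Because $P_1, P_2, P_3$ overlap on the $x$-variables, this absorption forces the relevant linear map to agree with coordinate projection onto $x$ on the polytope at hand, reducing the Definition \ref{Extended_Polytope_Dfn2} case to the Definition \ref{Extended_Polytope_Dfn} case already handled. Throughout, the hypothesis that $P_1, P_2$ lie in \emph{overlapping} spaces is essential, since it is exactly what licenses the identification of the $x$-coordinates across the three polytopes and thereby makes the projection cascade well-defined.
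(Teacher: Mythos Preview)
Your core argument---write $P_3$ with the constraints of $P_2$ as an explicit block, use the lifting property of Definition~\ref{Polytope_Augmentation_Dfn} to get $\mathrm{proj}_{(x,u)}(P_3)=P_2$, then cascade to $\mathrm{proj}_x(P_3)=\mathrm{proj}_x(P_2)$---is exactly what the paper does, and your treatment of Definitions~\ref{Extended_Polytope_Dfn} and~\ref{Extended_Polytope_Dfn3} matches the paper's.

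The difference is in how Definition~\ref{Extended_Polytope_Dfn2} is handled. The paper does not argue it separately at all: it opens by citing Remark~\ref{EF_Dfns_Observations_Rmk}.4, which (together with Assumption~\ref{Minimality_Assumption} and the hypotheses $A_1\neq\mathbf{0}$, $A_2\neq\mathbf{0}$) says the three definitions are equivalent in the overlapping, minimally-described setting, and then proves the biconditional only for Definition~\ref{Extended_Polytope_Dfn}. Your route instead treats Definition~\ref{Extended_Polytope_Dfn2} on its own, and the converse direction there is the weakest part of your proposal: the claim that absorbing the equalities of $\pi$ ``forces the relevant linear map to agree with coordinate projection onto $x$'' is asserted rather than demonstrated, and it is not obvious how to rule out a $\pi$ that genuinely uses the $v$-coordinates of $P_3$. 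The paper's shortcut via Remark~\ref{EF_Dfns_Observations_Rmk}.4 sidesteps this entirely and is both shorter and cleaner; you should adopt it in place of your separate Definition~\ref{Extended_Polytope_Dfn2} argument.
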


\begin{proof}
First, note that Definitions \ref{Extended_Polytope_Dfn},
\ref{Extended_Polytope_Dfn2}, and \ref{Extended_Polytope_Dfn3} are equivalent
to one another with respect to \textit{extension} relations for $P_{1}$,
$P_{2}$, and $P_{3}$ (see Remark \ref{EF_Dfns_Observations_Rmk}.4). Hence, it
is sufficient to prove the theorem for the \textit{standard definition}
(Definition \ref{Extended_Polytope_Dfn}). \medskip

$P_{3}$ can be written as:%
\begin{align*}
&  P_{3}:=\left\{  (x,u,v)\in\mathbb{R}^{q_{1}+q_{2}+q_{3}}:A_{2}x+Bu\leq
b;\text{ }A_{3}x+Cu+Dv\leq c\right\}  \text{ \ }(\text{where}\text{: }A_{3}%
\in\mathbb{R}^{r_{3}\times q_{1}};\\
&  C\in\mathbb{R}^{r_{3}\times q_{2}};D\in\mathbb{R}^{r_{3}\times q_{3}}%
;c\in\mathbb{R}^{r_{3}};\text{ }A_{3}x+Cu+Dv\leq c\text{ are \textit{redundant
for} }P_{2}).
\end{align*}

($A_{3}x+Cu+Dv\leq c)$ \textit{redundant for} $P_{2}$ $\Longrightarrow:$%
\begin{equation}
\forall(x,u)\in P_{2},\text{ }\exists v\in\mathbb{R}^{q_{3}}:(x,u,v)\in P_{3}.
\label{OLPs(a)}%
\end{equation}

(\ref{OLPs(a)}) $\Longrightarrow:$%
\begin{equation}
\{x\in\mathbb{R}^{q_{1}}:(\exists(u,v)\in\mathbb{R}^{q_{1}\mathbb{+}q_{2}%
}:(x,u,v)\in P_{3})\}=\{x\in\mathbb{R}^{q_{1}}:(\exists u\in\mathbb{R}^{q_{2}%
}:(x,u)\in P_{2})\}. \label{OLPs(b)}%
\end{equation}

(\ref{OLPs(b)}) $\Longrightarrow:$%
\[
\left(  \varphi_{x}(P_{3})=P_{1}\right)  \Longleftrightarrow\left(
\varphi_{x}(P_{2})=P_{1}\right)  .
\]
\medskip
\end{proof}

We will show in the next section (section \ref{Ill_Dfn_Condition_Subsection})
that Theorem \ref{Overlapg_Polytopes_Thm} does not hold\ for polyhedra which
are stated in \textit{independent spaces }(such as $P$ and $Q$ in Example
\ref{Indep_Spaces_Illustr}, or $X$ and $Y$ in Example
\ref{Polytope_Augmentation_Example}), and that, as indicated in Remark
\ref{Polytope_Augmentation_Rmk2} above, there cannot exist any
we\textit{ll-defined }(non-ambiguous, meaningful) \textit{extension}
relationship between such polytopes.

\subsection{Ill-definition condition for
EF's\label{Ill_Dfn_Condition_Subsection}}

Referring back to the \textit{standard} and \textit{alternate}
\textit{definitions} of \textit{extended formulations} (i.e., Definitions
\ref{Extended_Polytope_Dfn}, \ref{Extended_Polytope_Dfn2}, and
\ref{Extended_Polytope_Dfn3}, respectively), it is easy to verify (as
indicated in Remark \ref{EF_Dfns_Observations_Rmk}.4) that these three
definitions are equivalent when $G\neq\mathbf{0}$ (with $U$
\textit{minimally-described}). In other words, one can easily verify that
provided $G\neq\mathbf{0,}$ $U$ is an \textit{extended formulation} of $X$
according to one of these definitions if and only if $U$ is an
\textit{extended formulation} of $X$ according to the other definitions.
However, this is not true when $G=\mathbf{0.}$ \ \medskip

A basic intuition of Definitions \ref{Extended_Polytope_Dfn},
\ref{Extended_Polytope_Dfn2}, and \ref{Extended_Polytope_Dfn3} is that if the
projection of $U$ onto $x$\textit{-}space is equal to $X$, then the
description of $X$ must be implicit in a constraint set of the form:%
\[
Gx\leq\overline{g}_{w}.
\]

Hence, the notion of an \textit{extended formulation} can become
\textit{ill-defined} when $G=\mathbf{0}$ (i.e., when $U$ and $X$ are
\textit{independent spaces})$\mathbf{.}$ In essence, to put it roughly, there
is ``nothing'' in the statement of $U$ for the constraints of $X$ to be
``implicit in''\ (in $U)$ when $G=\mathbf{0.}$ Indeed, as we will show in the
discussion to follow, when $G=\mathbf{0,}$ Definition
\ref{Extended_Polytope_Dfn2} can become contradictory of Definitions
\ref{Extended_Polytope_Dfn} and \ref{Extended_Polytope_Dfn3}, resulting in an
\textit{ill-definition} (ambiguity) condition.\medskip

The theorem below shows that there exist no \textit{extension}
relations\textit{\ }between polytopes stated in \textit{independent spaces}
according to the \textit{standard definition }(Definition
\ref{Extended_Polytope_Dfn}), or the \textit{second alternate definition}
(Definition \ref{Extended_Polytope_Dfn3})\textit{\ }of \textit{extended
formulations}.

\begin{theorem}
\label{Independent_Spaces_Thm}Polytopes described in \textit{independent
spaces} cannot be \textit{extended formulations} of each other according to
the \textit{standard definition} (Definition \ref{Extended_Polytope_Dfn}) or
the \textit{second alternate definition} (Definition
\ref{Extended_Polytope_Dfn3}) of \textit{extended formulations}.\medskip

In other words, let $X$ $:=$ $\left\{  x\in\mathbb{R}^{p}:Ax\leq a\right\}  $
and $U:=\left\{  w\in\mathbb{R}^{q}:Hw\leq h\right\}  $ be (non-empty)
polytopes in \textit{independent spaces} (where: $A\in\mathbb{R}^{m\times p};$
$a\in\mathbb{R}^{m};$ $H\in\mathbb{R}^{n\times q};$ $h\in\mathbb{R}^{n}%
$)\textit{.} Then:

\begin{description}
\item $(i)$\textit{\ }$U$\ cannot be an \textit{extended formulation} of $X$
(and vice versa) according to Definition \ref{Extended_Polytope_Dfn};

\item $(ii)$\textit{\ }$U$\ cannot be an \textit{extended formulation} of $X$
(and vice versa) according to Definition \ref{Extended_Polytope_Dfn3}.
\end{description}
\end{theorem}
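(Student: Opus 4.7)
My approach is to exploit the independent-spaces hypothesis to force the $x$-coefficient matrix in any joint representation of $U$ to be $\mathbf{0}$, and then invoke Remark \ref{Conforti_Rmk} to obtain the contradiction. Concretely, by Definition \ref{Independent_Spaces_Alternate_Dfn} (see also Example \ref{Indep_Spaces_Illustr}), when $U$ is embedded in $\mathbb{R}^{p+q}$ so as to relate to $X$, it takes the form $\{(x,w)\in\mathbb{R}^{p+q}:\mathbf{0}\cdot x+Hw\leq h\}$, because no class of variables is shared between $X$ and $U$. Thus, in the notation of Definitions \ref{Extended_Polytope_Dfn} and \ref{Extended_Polytope_Dfn3}, we are necessarily in the case $G=\mathbf{0}$.

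For part $(i)$, Remark \ref{Conforti_Rmk} applied with $G=\mathbf{0}$ yields $\varphi_{x}(U)\in\{\varnothing,\,\mathbb{R}^{p}\}$, depending on whether there exists some $u\in C_{U}$ with $uh<0$. Since $X$ is a nonempty polytope, it is both nonempty and bounded, hence $X\neq\varnothing$ and $X\neq\mathbb{R}^{p}$. Therefore $\varphi_{x}(U)\neq X$, which contradicts Definition \ref{Extended_Polytope_Dfn}. The same reasoning with the roles of $X$ and $U$ interchanged establishes the reverse direction.

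For part $(ii)$, with $G=\mathbf{0}$ the condition ``$\exists w\in\mathbb{R}^{q}:(x,w)\in U$'' reduces to the $x$-free statement ``$\exists w\in\mathbb{R}^{q}:Hw\leq h$,'' which either holds for every $x\in\mathbb{R}^{p}$ (when $U\neq\varnothing$) or for no $x\in\mathbb{R}^{p}$ (when $U=\varnothing$). Accordingly, the set $\{x\in\mathbb{R}^{p}:\exists w,(x,w)\in U\}$ is either $\mathbb{R}^{p}$ or $\varnothing$, and neither coincides with the bounded, nonempty polytope $X$. Hence the biconditional required by Definition \ref{Extended_Polytope_Dfn3} fails, and again the same argument with the roles of $X$ and $U$ swapped handles the reverse direction.

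The main obstacle is the first step: making fully precise that the ``independent spaces'' assumption really rules out every alternative embedding of $U$ in $\mathbb{R}^{p+q}$ with a nonzero $x$-coefficient block. This is essentially a bookkeeping point about what it means for $x$ and $w$ to share no class of variables, but it is the hinge on which the entire argument turns. Once $G=\mathbf{0}$ is established, both parts $(i)$ and $(ii)$ follow immediately from Remark \ref{Conforti_Rmk} together with the observation that any (nonempty) polytope is bounded and hence a proper, nonempty subset of its ambient space.
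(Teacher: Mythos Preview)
Your proof is correct and follows essentially the same approach as the paper: embed $U$ in $\mathbb{R}^{p+q}$ with zero $x$-coefficient block, observe that the resulting projection onto $x$-space is all of $\mathbb{R}^{p}$ (since $U\neq\varnothing$), and conclude that this cannot equal the nonempty bounded polytope $X$. The only cosmetic difference is that you route part $(i)$ through Remark~\ref{Conforti_Rmk}, whereas the paper computes $\varphi_{x}(U')=\mathbb{R}^{p}$ directly from the definition of projection; both arrive at the same contradiction, and your handling of part $(ii)$ is identical in substance to the paper's.
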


\begin{proof}
We will show that $U$ cannot be an \textit{extended formulation} of $X$
according to Definitions \ref{Extended_Polytope_Dfn} and
\ref{Extended_Polytope_Dfn3}, respectively. The proofs that $X$ cannot be an
\textit{extended formulation} of $U$ according to the definitions (Definitions
\ref{Extended_Polytope_Dfn} and \ref{Extended_Polytope_Dfn3}, respectively)
are similar and will be therefore omitted.\medskip

Let $U$ be re-stated in $\mathbb{R}^{p+q}$ as:%
\begin{equation}
U^{\prime}:=\left\{  (x,w)\in\mathbb{R}^{p+q}:\mathbf{0}\cdot x+Hw\leq
h\right\}  . \label{EFThm1a}%
\end{equation}

Clearly, we have:
\begin{equation}
(x,w)\in U^{\prime}\Longleftrightarrow w\in U. \label{EFThm1aa}%
\end{equation}

Hence:
\begin{equation}
U\neq\varnothing\Longrightarrow U^{\prime}\neq\varnothing\Longrightarrow
\left(  \forall x\in\mathbb{R}^{p},\exists w\in\mathbb{R}^{q}:(x,w)\in
U^{\prime}\right)  . \label{EFThm1b}%
\end{equation}

Now consider \textit{conditions} $(i)$ and $(ii)$ of the theorem. We have the following.\medskip

\noindent$(i)$ \textit{Condition} $(i)$.\medskip

Using (\ref{EFThm1aa}) and Definition \ref{Extended_Polytope_Dfn}, we have:
\begin{equation}
\varphi_{x}(U)=\varphi_{x}(U^{\prime})=\left\{  x\in\mathbb{R}^{p}:\left(
\exists w\in\mathbb{R}^{q}:(x,w)\in U^{\prime}\right)  \right\}
=\mathbb{R}^{p}. \label{EFThm1c}%
\end{equation}
Because $X$ is bounded, we must have:%
\begin{equation}
X\subset\mathbb{R}^{p}. \label{EFThm1d}%
\end{equation}
Combining (\ref{EFThm1c}) and (\ref{EFThm1d}) gives:
\begin{equation}
\varphi_{x}(U)=\mathbb{R}^{p}\neq X. \label{EFThm1e}%
\end{equation}

\noindent$(ii)$ \textit{Condition} $(ii)$.\medskip

(\ref{EFThm1b}) $\Longrightarrow:$
\begin{equation}
\exists x\in\mathbb{R}^{d}\backslash X:(\exists w\in\mathbb{R}^{q}:(x,w)\in
U^{\prime}). \label{EFThm1f}%
\end{equation}

(\ref{EFThm1aa}) and (\ref{EFThm1f}) $\Longrightarrow:$
\begin{equation}
(w\in U)\Longleftrightarrow(x,w)\in U^{\prime}\nLeftrightarrow x\in X.
\label{EFThm1g}%
\end{equation}
Hence, the ``\textit{if and only if}'' condition of Definition
\ref{Extended_Polytope_Dfn3} cannot be satisfied in general. \ \ \medskip
\end{proof}

\begin{remark}
\label{Conforti_Rmk2}Theorem \ref{Independent_Spaces_Thm} is consistent with
Remark \ref{Conforti_Rmk} (p. \pageref{Conforti_Rmk}).
\end{remark}

\begin{corollary}
\label{Ill_Dfn_Coroll_1}Let $X:=$ $\left\{  x\in\mathbb{R}^{p}:Ax\leq
a\right\}  $ and $U:=\left\{  w\in\mathbb{R}^{q}:Hw\leq h\right\}  $ be
(non-empty) polytopes in \textit{independent spaces} (where: $A\in
\mathbb{R}^{m\times p};$ $a\in\mathbb{R}^{m};$ $H\in\mathbb{R}^{n\times q};$
$h\in\mathbb{R}^{n}$)\textit{.} Then, exactly one of the following is true:

\begin{description}
\item $(i)$ There exists no \textit{extended formulation }relationship between
$X$ and $U$ (i.e., there exists no linear map $\pi_{x}:\mathbb{R}%
^{q}\longrightarrow\mathbb{R}^{p}$ such that $\pi_{x}(U)=X,$ and there exists
no linear map $\pi_{w}:\mathbb{R}^{p}\longrightarrow\mathbb{R}^{q}$ such that
$\pi_{w}(X)=U$)$;$

\item $(ii)$ The \textit{extended formulation }relationship between $X$ and
$U$ is \textit{ill-defined} due to Definition \ref{Extended_Polytope_Dfn2}
being inconsistent with Definitions \ref{Extended_Polytope_Dfn} and
\ref{Extended_Polytope_Dfn3}, respectively (i.e., if there exists a linear map
$\pi_{x}:\mathbb{R}^{q}\longrightarrow\mathbb{R}^{p}$ such that $\pi
_{x}(U)=X,$ or there exists a linear map $\pi_{w}:\mathbb{R}^{p}%
\longrightarrow\mathbb{R}^{q}$ such that $\pi_{w}(X)=U,$ or both).\medskip
\end{description}
\end{corollary}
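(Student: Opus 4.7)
The plan is to use Theorem \ref{Independent_Spaces_Thm} as the main engine and then do a simple dichotomy on whether a suitable linear map exists in the sense of Definition \ref{Extended_Polytope_Dfn2}. Since the corollary just asserts that exactly one of two scenarios occurs, the task is essentially to exhibit mutual exclusivity and exhaustiveness, rather than to construct any new objects.

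First I would invoke Theorem \ref{Independent_Spaces_Thm} directly: because $X$ and $U$ are in \textit{independent spaces}, $U$ is \emph{not} an extended formulation of $X$ (and $X$ is not an extended formulation of $U$) under Definitions \ref{Extended_Polytope_Dfn} or \ref{Extended_Polytope_Dfn3}. This is established regardless of which linear maps, if any, connect the two spaces. So in either case of the claimed dichotomy, the Definition~\ref{Extended_Polytope_Dfn} and Definition~\ref{Extended_Polytope_Dfn3} sides of the story are already pinned down as ``no EF relationship.''

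Next I would split on the behavior of Definition \ref{Extended_Polytope_Dfn2}. Either (a) there exists no linear map $\pi_{x}:\mathbb{R}^{q}\rightarrow\mathbb{R}^{p}$ with $\pi_{x}(U)=X$ and no linear map $\pi_{w}:\mathbb{R}^{p}\rightarrow\mathbb{R}^{q}$ with $\pi_{w}(X)=U$, in which case no definition produces an EF relationship and condition $(i)$ holds; or (b) at least one of these linear maps exists, in which case Definition~\ref{Extended_Polytope_Dfn2} yields an EF relationship while Definitions~\ref{Extended_Polytope_Dfn} and~\ref{Extended_Polytope_Dfn3} (by Theorem \ref{Independent_Spaces_Thm}) do not, so the three definitions disagree and condition $(ii)$ holds. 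Mutual exclusivity is immediate because the defining clauses of $(i)$ and $(ii)$ are logical negations of each other on the existence of the maps $\pi_{x}$ or $\pi_{w}$, and exhaustiveness is just the law of excluded middle applied to that same existential statement.

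The only subtlety, and what I would treat as the one genuine step rather than bookkeeping, is verifying that in case (b) the map guaranteed by Definition~\ref{Extended_Polytope_Dfn2} really does conflict with the verdict of Theorem \ref{Independent_Spaces_Thm} and is not somehow reconcilable. Here the key observation (already foreshadowed in Remarks \ref{EF_Dfns_Observations_Rmk}.4 and \ref{Conforti_Rmk}) is that the existence of such a $\pi_{x}$ or $\pi_{w}$ places no constraint of the form $Gx\leq\overline{g}_{w}$ onto the ambient space of the other polytope, so the projection $\varphi_{x}(U)$ in the sense of Definition~\ref{Extended_Polytope_Dfn} remains all of $\mathbb{R}^{p}$ (as in equation (\ref{EFThm1c}) of the preceding proof), which differs from $X$. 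Thus the relationship asserted by Definition~\ref{Extended_Polytope_Dfn2} genuinely contradicts the verdicts of Definitions~\ref{Extended_Polytope_Dfn} and~\ref{Extended_Polytope_Dfn3}, yielding the \textit{ill-definition} asserted in $(ii)$. I expect this last reconciliation step to be the main, and in fact the only nontrivial, obstacle; everything else is an immediate consequence of the theorem just proved.
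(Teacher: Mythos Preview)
Your proposal is correct and matches the paper's approach: the paper gives no separate proof for this corollary, treating it as an immediate consequence of Theorem \ref{Independent_Spaces_Thm} together with a case split on whether a linear map in the sense of Definition \ref{Extended_Polytope_Dfn2} exists, exactly as you outline. Your elaboration of the ``reconciliation'' step is more detailed than anything the paper provides, but it is consistent with the surrounding remarks and Example \ref{Ill_Dfn_Coroll_Illustr}.
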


\begin{example}
\label{Ill_Dfn_Coroll_Illustr}Corollary \ref{Ill_Dfn_Coroll_1}.$ii$ can be
illustrated using the polytopes $P$ and $Q$ of Example
\ref{Indep_Spaces_Illustr}. \medskip

\noindent We have:

\begin{description}
\item $(i)$ $\ \varphi_{y}(P)=\mathbb{R}^{2}\neq Q$ and \ $\varphi
_{x}(Q)=\mathbb{R}^{2}\neq P.$ \medskip\newline Hence, according Definition
\ref{Extended_Polytope_Dfn}, there exists no extension relationship between
$P$ and $Q$;

\item $(ii)$ $\ \exists x\notin P:(\exists y\in\mathbb{R}^{2}:(x,y)\in Q),$
which implies: $(x\in P\nLeftrightarrow(\exists$ $y\in\mathbb{R}^{2}:(x,y)\in
Q)).$ Similarly, $\exists y\notin Q:(\exists x\in\mathbb{R}^{2}:(x,y)\in Q),$
which implies: $(y\in Q\nLeftrightarrow(\exists$ $x\in\mathbb{R}^{2}:(x,y)\in
P)).\medskip$\newline Hence, according Definition \ref{Extended_Polytope_Dfn3}%
, there exists no extension relationship between $P$ and $Q$;

\item $(iii)$ $(x,y)\in(P,Q)\Longrightarrow$ $\left[
\begin{array}
[c]{c}%
x_{1}\\
x_{2}%
\end{array}
\right]  =\left[
\begin{array}
[c]{cc}%
1 & 1\\
0 & 0
\end{array}
\right]  \left[
\begin{array}
[c]{c}%
y_{1}\\
y_{2}%
\end{array}
\right]  .$ \newline In other words, $(x,y)\in(P,Q)\Longrightarrow x=Ay$,
where $A=\left[
\begin{array}
[c]{cc}%
1 & 1\\
0 & 0
\end{array}
\right]  $ is the matrix for a linear transformation which maps $P$ and
$Q.\medskip$\newline Hence, according to Definition
\ref{Extended_Polytope_Dfn2}, $Q$ is an \textit{extended formulation} of $P$,
which is in contradiction of $(i)$ and $(ii)$ above.
\end{description}

\noindent$\square$ \ 
\end{example}

The \textit{ill-definition} condition stated in Corollary
\ref{Ill_Dfn_Coroll_1} will be further developed in the remainder of this
section. We start with a notion which essentially generalizes the idea of the
linear map $(\pi)$ in Definition \ref{Extended_Polytope_Dfn2} with respect to
the task of optimizing a linear function over a polyhedral set (since each of
the linear equations that specify $\pi$ must be \textit{valid} for $U$ and
$X$, respectively).

\begin{theorem}
\label{Extended_Formulation_Thm}Any two non-empty polytopes expressed in
\textit{independent spaces} can be respectively \textit{augmented} into being
\textit{extended formulations} of each other. In other words, let $x^{1}%
\in\mathbb{R}^{n_{1}}$ ($n_{1}\in\mathbb{N}_{+}$) and $x^{2}\in\mathbb{R}%
^{n_{2}}$ ($n_{2}\in\mathbb{N}_{+}$) be vectors of variables in two
\textit{independent spaces}. Then, every non-empty polytope in $x^{1}$ can be
\textit{augmented} into an \textit{extended formulation} of every other
non-empty polytope in $x^{2}$, and vice versa.
\end{theorem}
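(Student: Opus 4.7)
The plan is to prove the theorem by explicit construction, building the augmentation as a Cartesian-product-type polytope and verifying the two required properties from the relevant definitions.

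First I would set up notation: write $P_1 := \{x^1 \in \mathbb{R}^{n_1} : A_1 x^1 \leq a_1\}$ and $P_2 := \{x^2 \in \mathbb{R}^{n_2} : A_2 x^2 \leq a_2\}$, both non-empty, with $x^1$ and $x^2$ in independent spaces. By symmetry it suffices to exhibit an augmentation of $P_1$ that is an extended formulation of $P_2$. The candidate construction is the obvious one:
\[
\bar{P}_1 := \left\{ (x^1, x^2) \in \mathbb{R}^{n_1 + n_2} : A_1 x^1 \leq a_1, \ A_2 x^2 \leq a_2 \right\}.
\]
This fits the template of Definition \ref{Polytope_Augmentation_Dfn} with the ``linking'' block $B = \mathbf{0}$ and $C = A_2$, $b = a_2$, so the new block of constraints involves only the new variables.

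Next I would verify that $\bar{P}_1$ is indeed an augmentation of $P_1$ in the sense of Definition \ref{Polytope_Augmentation_Dfn}. The original constraints $A_1 x^1 \leq a_1$ of $P_1$ appear verbatim. For the augmentation condition, pick any $x^1 \in P_1$; since $P_2$ is non-empty, choose any $\hat{x}^2 \in P_2$, and then $(x^1, \hat{x}^2) \in \bar{P}_1$. Hence $\forall x^1 \in P_1$, $\exists x^2 \in \mathbb{R}^{n_2}$ with $(x^1, x^2) \in \bar{P}_1$, as required.

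Then I would check that $\bar{P}_1$ is an extended formulation of $P_2$ under Definition \ref{Extended_Polytope_Dfn2}. Define the linear map $\pi : \mathbb{R}^{n_1 + n_2} \to \mathbb{R}^{n_2}$ by $\pi(x^1, x^2) := x^2$. Clearly $\pi(\bar{P}_1) \subseteq P_2$ because any point in $\bar{P}_1$ satisfies $A_2 x^2 \leq a_2$. Conversely, for any $x^2 \in P_2$, picking any $\hat{x}^1 \in P_1$ (which exists since $P_1 \neq \varnothing$) gives $(\hat{x}^1, x^2) \in \bar{P}_1$ with $\pi(\hat{x}^1, x^2) = x^2$, so $P_2 \subseteq \pi(\bar{P}_1)$. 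Thus $\pi(\bar{P}_1) = P_2$, and $\bar{P}_1$ extends $P_2$ in the sense of Definition \ref{Extended_Polytope_Dfn2}. The reverse direction (an augmentation of $P_2$ that is an EF of $P_1$) follows by interchanging the roles of $x^1$ and $x^2$.

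There is no real obstacle here — the construction is almost tautological — but that is precisely the point the paper is making. The ``difficulty,'' if any, is conceptual rather than technical: I should emphasize in the writeup that the only non-trivial use of the hypotheses is the non-emptiness of both polytopes (needed to supply witnesses for the augmentation condition and for surjectivity of $\pi$) and the independence of the spaces (which lets the block structure $B = \mathbf{0}$ make sense and prevents any interaction that would contradict Theorem \ref{Overlapg_Polytopes_Thm}). The triviality of this construction is exactly what exhibits the ill-definition phenomenon: under Definition \ref{Extended_Polytope_Dfn2}, arbitrary unrelated polytopes become EFs of each other after a costless augmentation, while under Definitions \ref{Extended_Polytope_Dfn} and \ref{Extended_Polytope_Dfn3} no such relation can hold (by Theorem \ref{Independent_Spaces_Thm}).
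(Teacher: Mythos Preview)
Your proof is correct and follows essentially the same constructive approach as the paper: build a single polytope containing both constraint systems, then observe that it augments each $P_i$ (using non-emptiness of the other) while projecting onto the other. The paper's construction $W$ is slightly more elaborate --- it throws in arbitrary redundant linking constraints $B_{1}x^{1}+B_{2}x^{2}-u\leq 0$ with slack variables $u$ and diagonal scaling matrices $C_{1},C_{2}$ --- but these decorations are inessential, and your bare Cartesian product $\bar{P}_{1}=P_{1}\times P_{2}$ is simply the cleanest instance of the same idea (indeed it is the paper's $W$ with $B_{1}=B_{2}=\mathbf{0}$, $u$ omitted, and $C_{1}=C_{2}=I$).
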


\begin{proof}
\ The proof is essentially by construction. \medskip

\noindent Let $P_{1}$ and $P_{2}$ be polytopes specified as:%
\begin{align*}
&  P_{1}=\{x^{1}\in\mathbb{R}^{n_{1}}:A_{1}x^{1}\leq a_{1}\}\neq
\varnothing\text{ \ }(\text{where }A_{1}\in\mathbb{R}^{p_{1}\times n_{1}%
}\text{, and }a_{1}\in\mathbb{R}^{p_{1}});\\[0.09in]
&  P_{2}=\{x^{2}\in\mathbb{R}^{n_{2}}:A_{2}x^{2}\leq a_{2}\}\neq
\varnothing\text{ \ (where }A_{2}\in\mathbb{R}^{p_{2}\times n_{2}}\text{, and
}a_{2}\in\mathbb{R}^{p_{2}}).
\end{align*}

Clearly, $\forall(x^{1},$ $x^{2})\in P_{1}\times P_{2},$ $\forall
q\in\mathbb{N}_{\mathbb{+}}$, $\forall B_{1}\in\mathbb{R}^{q}{}^{\times n_{1}%
},$ $\forall B_{2}\in\mathbb{R}^{q\times n_{2}},$ there exists $u\in
\mathbb{R}_{\nless}^{q}$ such that the constraints%
\begin{equation}
B_{1}x^{1}+B_{2}x^{2}-u\leq0 \label{EF_Thm(a)}%
\end{equation}
are \textit{valid} for $P_{1}$ and $P_{2}$, respectively (i.e., they are
\textit{redundant} for $P_{1}$ and $P_{2},$ respectively).\medskip

Now, consider :%
\begin{align}
W:=  &  \left\{  (x^{1},x^{2},u)\in\mathbb{R}^{n_{1}}\times\mathbb{R}^{n_{2}%
}\times\mathbb{R}_{\nless}^{q}:\right. \nonumber\\[0.06in]
&  C_{1}A_{1}x^{1}\leq C_{1}a_{1};\text{ }\label{EF_Thm(b)}\\[0.06in]
&  B_{1}x^{2}+B_{2}x^{1}-u\leq0;\label{EF_Thm(c)}\\[0.06in]
&  \left.  C_{2}A_{2}x^{2}\leq C_{2}a_{2}\right\}  \label{EF_Thm(d)}%
\end{align}
(where: $C_{1}\in$ $\mathbb{R}^{p_{1}\times}{}^{p_{1}}$ and $C_{2}$
$\in\mathbb{R}^{p_{2}\times}{}^{p_{2}}$ are diagonal matrices with non-zero
diagonal entries).\medskip

Clearly, $W$ \textit{augments} $P_{1}$ and\textit{\ }$P_{2}$ respectively.
Hence:%
\begin{equation}
W\text{ is equivalent to }P_{1},\text{ and} \label{EF_Thm(g)}%
\end{equation}%
\begin{equation}
W\text{ is equivalent to }P_{2}\text{ .} \label{EF_Thm(h)}%
\end{equation}

Also clearly, we have:%
\begin{equation}
\varphi_{x^{1}}(W)=P_{1}\text{ \ (since }P_{2}\neq\varnothing,\text{ and
((\ref{EF_Thm(c)}) and (\ref{EF_Thm(d)}) are redundant for }P_{1}%
)),\text{\ \ and } \label{EF_Thm(e)}%
\end{equation}%
\begin{equation}
\varphi_{x^{2}}(W)=P_{2}\text{ \ (since }P_{1}\neq\varnothing,\text{ and
((\ref{EF_Thm(b)}) and (\ref{EF_Thm(c)}) are redundant for }P_{2})\text{)}.
\label{EF_Thm(f)}%
\end{equation}

It follows from the combination of (\ref{EF_Thm(g)}) and (\ref{EF_Thm(f)})
that $P_{1}$ is an \textit{extended formulation} of $P_{2}.\medskip$

It follows from the combination of (\ref{EF_Thm(h)}) and (\ref{EF_Thm(e)})
that $P_{2}$ is an \textit{extended formulation} of $P_{1}.$\ \ \medskip\ 
\end{proof}

\begin{corollary}
\label{EF_Corollary}Provided polytopes can be arbitrarily \textit{augmented}
for the purpose of establishing \textit{extended formulation} relationships,
every two (non-empty) polytopes expressed in \textit{independent spaces} are
\textit{extended formulations} of each other. \medskip
\end{corollary}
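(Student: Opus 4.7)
The plan is to derive the corollary as an almost immediate consequence of Theorem \ref{Extended_Formulation_Thm}, since the theorem has already done the constructive work. Given arbitrary non-empty polytopes $P_1$ and $P_2$ stated in the independent spaces $x^1 \in \mathbb{R}^{n_1}$ and $x^2 \in \mathbb{R}^{n_2}$, respectively, I would invoke the theorem to produce the augmentation $W$ in $\mathbb{R}^{n_1}\times\mathbb{R}^{n_2}\times\mathbb{R}_{\nless}^{q}$ defined by (\ref{EF_Thm(b)})--(\ref{EF_Thm(d)}). From the theorem's proof we already have $\varphi_{x^1}(W)=P_1$ and $\varphi_{x^2}(W)=P_2$, together with the facts that $W$ augments $P_1$ and that $W$ augments $P_2$.

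Second, I would use the hypothesis of the corollary — that arbitrary augmentation is admissible for the purpose of establishing EF relationships — to identify $P_1$ with $W$ (equivalent by (\ref{EF_Thm(g)})) and $P_2$ with $W$ (equivalent by (\ref{EF_Thm(h)})) as objects available to serve as candidate extended formulations. Combining (\ref{EF_Thm(g)}) with (\ref{EF_Thm(f)}) then exhibits the (augmented) $P_1$ as an EF of $P_2$ in the sense of Definition \ref{Extended_Polytope_Dfn}, and combining (\ref{EF_Thm(h)}) with (\ref{EF_Thm(e)}) symmetrically exhibits the (augmented) $P_2$ as an EF of $P_1$. Since $P_1$ and $P_2$ were arbitrary non-empty polytopes in independent spaces, the universal conclusion of the corollary follows.

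The corollary carries no technical difficulty beyond what is already contained in Theorem \ref{Extended_Formulation_Thm}; it is essentially a restatement that isolates the conditional clause. The only point that requires care is the conceptual one: the whole force of the statement lies in the hypothesis being granted. If one disallows arbitrary augmentation — i.e., if one refuses to identify $P_i$ with $W$ for EF-testing purposes — then the construction still exists but the conclusion ``$P_1$ and $P_2$ are EFs of each other'' cannot be drawn. Thus the main obstacle, such as it is, is simply flagging explicitly that the corollary is the contrapositive impetus for the paper's central cautionary message: unrestricted augmentation collapses the EF relation into a triviality on independent-space polytope pairs, and this is precisely the degeneracy that must be ruled out to keep the EF notion meaningful.
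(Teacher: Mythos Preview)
Your proposal is correct and matches the paper's approach exactly: the paper states Corollary~\ref{EF_Corollary} without a separate proof, treating it as an immediate consequence of Theorem~\ref{Extended_Formulation_Thm} via precisely the combinations (\ref{EF_Thm(g)})+(\ref{EF_Thm(f)}) and (\ref{EF_Thm(h)})+(\ref{EF_Thm(e)}) that you spell out. Your added commentary on the role of the conditional hypothesis is accurate and in the spirit of the paper's surrounding remarks.
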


Theorem \ref{Extended_Formulation_Thm} and Corollary \ref{EF_Corollary} are
illustrated below.

\begin{example}
\label{Extended_Formulation_Example}\ \ \medskip\newline Let
\begin{align*}
&  P_{1}=\{x\in\mathbb{R}_{\nless}^{2}:2x_{1}+x_{2}\leq6\};\\[0.06in]
&  P_{2}=\{w\in\mathbb{R}_{\nless}^{3}:18w_{1}-w_{2}\leq23;\text{ }%
59w_{1}+w_{3}\leq84\}.
\end{align*}
For arbitrary matrices $B_{1},$ $B_{2}$, $C_{1},$ and $C_{2}$ (of appropriate
dimensions, respectively)$;$ say $B_{1}=\left[
\begin{array}
[c]{cc}%
-1 & 2\\
3 & -4
\end{array}
\right]  ,$ $B_{2}=\left[
\begin{array}
[c]{ccc}%
5 & -6 & 7\\
-10 & 9 & -8
\end{array}
\right]  ,$ $C_{1}=\left[  7\right]  ,$ and $C_{2}=\left[
\begin{array}
[c]{cc}%
2 & 0\\
0 & 0.5
\end{array}
\right]  ;$ $P_{1}$ and $P_{2}$ can be \textit{augmented} into
\textit{extended formulations} of each other using $u\in\mathbb{R}_{\nless
}^{2}$ and $W$:
\begin{align*}
W=  &  \left\{  (x,w,u)\in\mathbb{R}_{\nless}^{2+3+2}:\text{ \ }\left[
7\right]  \left[
\begin{array}
[c]{cc}%
2 & 1
\end{array}
\right]  \left[
\begin{array}
[c]{c}%
x_{1}\\
x_{2}%
\end{array}
\right]  \leq42\right.  ;\\
&  \left[
\begin{array}
[c]{cc}%
-1 & 2\\
3 & -4
\end{array}
\right]  \left[
\begin{array}
[c]{c}%
x_{1}\\
x_{2}%
\end{array}
\right]  +\left[
\begin{array}
[c]{ccc}%
5 & -6 & 7\\
-10 & 9 & -8
\end{array}
\right]  \left[
\begin{array}
[c]{c}%
w_{1}\\
w_{2}\\
w_{3}%
\end{array}
\right]  -\left[
\begin{array}
[c]{c}%
u_{1}\\
u_{2}%
\end{array}
\right]  \leq\left[
\begin{array}
[c]{c}%
0\\
0
\end{array}
\right]  ;\\
&  \left.  \left[
\begin{array}
[c]{cc}%
2 & 0\\
0 & 0.5
\end{array}
\right]  \left[
\begin{array}
[c]{ccc}%
18 & -1 & 0\\
59 & 0 & 1
\end{array}
\right]  \left[
\begin{array}
[c]{c}%
w_{1}\\
w_{2}\\
w_{3}%
\end{array}
\right]  \leq\left[
\begin{array}
[c]{c}%
46\\
42
\end{array}
\right]  \right\}  .\text{ \ }%
\end{align*}
$\square$
\end{example}

\begin{remark}
\ \ 

\begin{enumerate}
\item According to Corollary \ref{EF_Corollary}, the notion of EF becomes
degenerate when $G=\mathbf{0}$ in Definitions \ref{Extended_Polytope_Dfn},
\ref{Extended_Polytope_Dfn2}, and \ref{Extended_Polytope_Dfn3}, respectively,
and one tries to apply it by \textit{augmenting} one of the polytopes at hand.

\item Theorem \ref{Extended_Formulation_Thm} and Corollary \ref{EF_Corollary}
are not true for polytopes expressed in \textit{overlapping spaces}, and that
in fact, these two results are in contradiction of Theorem
\ref{Overlapg_Polytopes_Thm}. Hence, whereas one can arbitrarily augment
polytopes in \textit{overlapping spaces} for the purpose of establishing EF
relationships, such an approach is invalid (cannot produce valid results) for
polytopes stated in \textit{independent spaces.\medskip}
\end{enumerate}

Clearly, any notion of ``extension'' which allows for an object to be
extensions of its own extensions cannot be a we\textit{ll-defined} one (i.e.,
must be an \textit{ill-defined} one), unless the objects involved are
indistinguishable from their respective ``extensions.'' For example, clearly,
one cannot reasonably argue that $P_{1}$ and $P_{2}$ in Example
\ref{Extended_Formulation_Example} above are \textit{extended formulations} of
each other in a meaningful sense.
\end{remark}

\section{Redundancy matters when relating polytopes stated in
\textit{independent spaces\label{Redundancy_Matters_Section}}}

The notion of \textit{independent spaces} we have introduced in this paper is
important because, as we have shown, it refines the notion of EFs by
separating the case in which that notion is degenerate (with every polytope
potentially being an EF of every other polytope) from the case where the
notion of EF is \textit{well-defined}/meaningful. It separates the case in
which the addition of redundant constraints and variables (for the purpose of
establishing EF relations) matters (i.e., makes a difference to the outcome of
analysis) from the case in which the addition of redundant constraints and
variables does not matter. \medskip

Two key results of section \ref{Ill_Definition_Section}\ of this papers are that:

\begin{enumerate}
\item If $U$ in $\mathbb{R}^{p}$ and $V$ in $\mathbb{R}^{q}$ are in
\textit{overlapping spaces}, then an \textit{augmentation} of $V$ is an EF of
\textit{U} if and only if $V$ is an EF of $U$; (This is the case where the
addition of redundant constraints does not matter, and is stated in Theorem
\ref{Overlapg_Polytopes_Thm} on page \pageref{Overlapg_Polytopes_Thm} of this paper);

\item But ($1$) is not true if $U$ and $V$ are in \textit{independent spaces}.
As we have shown in section \ref{Ill_Dfn_Condition_Subsection}, if polytopes
can always be arbitrarily augmented for the purpose of establishing EF
relations, then any two polytopes that are in \textit{independent spaces} are
EF's of each other. (This is the case where adding redundant constraints and
variables does matter, and is stated in Theorem \ref{Extended_Formulation_Thm}
on page \pageref{Extended_Formulation_Thm} of this paper).
\end{enumerate}

Because of ($2$) above, the addition of redundant constraints and variables
for the purpose of establishing EF relations can lead to ambiguities when
applied to polytopes in \textit{independent spaces}. This ambiguity would stem
from the fact that one would reach contradicting conclusions depending on what
we do with the redundant constraints and variables which are introduced in
order to make the models \textit{overlap}. To clarify this: Assume V is
augmented with the variables of $U$ plus the constraints for the linear
transformation that establishes the $1-1$ correspondence between $U$ and $V$;
call this augmented-$V$, $V^{\prime}$. In EF work it is commonly
assumed/suggested that redundant constraints and variables of a model can be
removed from it without any loss of generality. In the case of $V^{\prime}$
this would lead to contradictory conclusions with respect to the question of
whether or not $V^{\prime}$ is an EF of $U$: If the added redundant
constraints and variables are kept, the answer would be ``yes''; If they are
removed, $V^{\prime}$ would ``revert'' back to $V$, so that the answer would
be ``no.'' This is the inconsistency which is pointed out in Theorem
\ref{Independent_Spaces_Thm} (p. \pageref{Independent_Spaces_Thm}) and
Corollary \ref{Ill_Dfn_Coroll_1} (p. \pageref{Ill_Dfn_Coroll_1}), and
illustrated in Example \ref{Ill_Dfn_Coroll_Illustr} (p.
\pageref{Ill_Dfn_Coroll_Illustr}).\medskip

A specific implication of ($2$) above is that, provided polytopes can be
arbitrarily augmented for the purpose of EF's, every conceivable polytope that
is non-empty and does not require the \textit{traditional }$x$%
\textit{-variables} (i.e., the city-to-city $x_{i,j}$ variables of the
\textit{Standard TSP Polytope}; see Definitions \ref{traditional x-variables}
and \ref{Standard_TSP_Polytope} (p. \pageref{Standard_TSP_Polytope})) in its
description is an EF of the \textit{Standard TSP Polytope}, and vice versa.
Clearly, this can only be in a degenerate/non-meaningful sense from which no
valid inferences can be made.

\subsection{The case of the Minimum Spanning Tree
Problem\label{Min_Span_Tree_SubSection}}

A case in point for the discussions in the introduction to section
\ref{Redundancy_Matters_Section} above, is that of the Minimum Spanning Tree
Problem (MSTP). Without the refinement that our notion of \textit{independent
spaces} contributes to the notion of EFs, this case (of the MSTP) would mean
that it is possible to \textit{extend} an exponential-sized model into a
polynomial-sized one by (simply) \textit{augmenting}\ the exponential model,
which is a clearly-unreasonable/out-of-the-question proposition. This
proposition would be arrived at as follows. Assume (as is normally done in EF
work) that the addition of redundant constraints and variables does not matter
as far EF theory is concerned. Since the constraints of Edmonds' model
(Edmonds (1970)) are redundant for the model of Martin (1991), one could
\textit{augment} Martin's formulation with these constraints. The resulting
model would still be considered a polynomial-sized one. But note that this
particular \textit{augmentation }of Martin's model would also be an
\textit{augmentation} of Edmonds' model. Hence, the conclusion would be that
Edmonds' exponential-sized model has been \textit{augmented} into a
polynomial-sized one, which is an impossibility, since one cannot reduce the
number of facets of a given polytope by \textit{augmenting} that polytope. The
refinement brought by our notion of \textit{independent spaces} explains this
paradox in the case of MSTP, as shown below.

\begin{example}
\label{EF_MST_Example}We show that Martin's polynomial-sized LP model of the
MSTP is not an EF (in a non-degenerate, meaningful sense) of Edmonds's
exponential LP model of the MSTP, by showing that Martin's model can be stated
in \textit{independent space} relative to Edmonds' model.
\end{example}

\begin{itemize}
\item \textbf{Using the notation in Martin(1991), i.e.:}

\begin{itemize}
\item $N:=\{1,\ldots,n\}$ \ \ (Set of vertices);

\item $E:$ \ Set of edges;

\item $\forall S\subseteq N,$ $\gamma(S):$ Set of edges with both ends in $S $.\medskip
\end{itemize}

\item \textbf{Exponential-sized/``sub-tour elimination'' LP formulation
(Edmonds (1970))}
\end{itemize}

$(P)$:\medskip\newline
\begin{tabular}
[c]{l}%
\ \ \
\end{tabular}
$\left|
\begin{tabular}
[c]{ll}%
$\text{Minimize:}$ & $\sum\limits_{e\in E}c_{e}x_{e}$\\
& \\
$\text{Subject To:}$ & $\sum\limits_{e\in E}x_{e}=n-1;$\\
& \\
& $\sum\limits_{e\in\gamma(S)}x_{e}\leq\left|  S\right|  -1;$ \ $\ S\subset
E$\ $;$\\
& \\
& $x_{e}\geq0$ \ for all $e\in E.$%
\end{tabular}
\text{ \ }\right.  $\medskip

\begin{itemize}
\item \textbf{Polynomial-sized LP reformulation (Martin (1991))}
\end{itemize}

$(Q)$:\medskip\newline
\begin{tabular}
[c]{l}%
\ \ \
\end{tabular}
$\left|
\begin{tabular}
[c]{ll}%
$\text{Minimize:}$ & $\sum\limits_{e\in E}c_{e}x_{e}$\\
& \\
$\text{Subject To:}$ & $\sum\limits_{e\in E}x_{e}=n-1;$\\
& \\
& $z_{k,i,j}+z_{k,j,i}=x_{e};$ \ \ $k=1,\ldots,n;$ \ $e\in\gamma(\{i,j\}); $\\
& \\
& $\sum\limits_{s>i}z_{k,i,s}+\sum\limits_{h<i}z_{k,i,h}\leq1;$
$\ \ k=1,\ldots,n;$ $\ \ i\neq k;$\\
& \\
& $\sum\limits_{s>k}z_{k,k,s}+\sum\limits_{h<k}z_{k,k,h}\leq0;$ \ $k=1,\ldots
,n;$\\
& \\
& $x_{e}\geq0$ \ for all $e\in E$; \ \ $z_{k,i,j}\geq0$ \ for all $k,$ $i,$
$j.$%
\end{tabular}
\text{ \ }\right.  \medskip$\ 

\begin{itemize}
\item \textbf{Re-statement of Martin's LP model (Diaby and Karwan (2013);
Regan (2013))\medskip}
\end{itemize}

For each $e\in E:\medskip$

\qquad- Denote the ends of $e$ as $i_{e}$ and $j_{e},$respectively;\medskip

\qquad- Fix an arbitrary node, $r_{e}$, which is not incident on $e$ (i.e.,
$r_{e}$ is such that it is not an end of $e$). \medskip

Then, one can verify that $Q$ is equivalent to:\medskip\newline $(Q\prime
)$:\medskip\newline
\begin{tabular}
[c]{l}%
\ \ \
\end{tabular}
$\left|
\begin{tabular}
[c]{ll}%
$\text{Minimize:}$ & $\sum\limits_{e\in E}c_{e}z_{r_{e},i_{e},j_{e}}%
+\sum\limits_{e\in E}c_{e}z_{r_{e},j_{e},i_{e}}$\\
& \\
$\text{Subject To:}$ & $\sum\limits_{e\in E}z_{r_{e},i_{e},j_{e}}%
+\sum\limits_{e\in E}z_{r_{e},j_{e},i_{e}}=n-1;$\\
& \\
& $z_{k,i_{e},j_{e}}+z_{k,j_{e},i_{e}}=z_{r_{e},i_{e},j_{e}}+z_{r_{e}%
,j_{e},i_{e}}; $ \ \ $k=1,\ldots,n;$\ $\ \ e\in E;$\\
& \\
& $\sum\limits_{s>i}z_{k,i,s}+\sum\limits_{h<i}z_{k,i,h}\leq1;$ \ \ $i,$
$k=1,\ldots,n:i\neq k;$\\
& \\
& $\sum\limits_{s>k}z_{k,k,s}+\sum\limits_{h<k}z_{k,k,h}\leq0;$
\ \ $k=1,\ldots,n;$\\
& \\
& $z_{k,i,j}\geq0$ \ for all $k,$ $i,$ $j.$%
\end{tabular}
\text{ \ }\right.  $

\noindent$\square\medskip$

\begin{remark}
\label{EF_MSTP_Rmk}We argue that the reason that EF modeling ``barriers'' do
not apply in the case of the MSTP is due to the fact\ that Martin's
formulation of the MSTP ($Q$ in Example \ref{EF_MST_Example} above) is not an
EF of Edmonds' model ($P$ in Example \ref{EF_MST_Example} above) in a
non-degenerate/meaningful, \textit{well-defined} sense. The reason for this in
turn, is that Martin's formulation \textit{can be} stated in
\textit{independent space} relative to Edmonds' model, due to Martin's
formulation having \textit{column-redundancy} when it includes the
\textit{class of variables} of Edmonds' model (see Definition
\ref{Column_Redundancy_Dfn}), as shown in Example \ref{EF_MST_Example}.
\end{remark}

\subsection{Alternate/Auxiliary Models}

In this section, we provide some insights into the meaning of the existence of
an affine map establishing a one-to-one correspondence between polytopes that
are stated in \textit{independent spaces} as brought to our attention in
private discussions by Yannakakis (2013). The linear map stipulated in
Definition \ref{Extended_Polytope_Dfn2} is a special case of the affine map.
Referring back to Definitions \ref{Extended_Polytope_Dfn}%
-\ref{Extended_Polytope_Dfn3}, assume $G=\mathbf{0}$ in the expression of $U$.
We will show in this section, that in that case, provided the matrix of the
affine transformation does not have any strictly-negative entry, $U$ is simply
an alternate model (a ``reformulation'') of $P$ which can be used, in an
``auxiliary'' way, in order to solve the optimization problem over $P$ without
any reference to/knowledge of the $\mathcal{H}$-description of $P$ (see Remark
\ref{Alternate_TSP_Polytope_Rmk} (p. \pageref{Alternate_TSP_Polytope_Rmk}) of
this paper).

\begin{remark}
\label{EF_Insight_Rmk1} \ 

\begin{itemize}
\item Referring back to Example \ref{Polytope_Augmentation_Example} (p.
\pageref{Polytope_Augmentation_Example}), assume that the non-negativity
requirements for $x$ and $y$ are included in the constraints of $X$ and $Y$,
respectively, and that $L$ has the form:%
\[
L=\{(x,y)\in\mathbb{R}_{\nless}^{p+q}:x-Cy=b\}\text{ \ \ (where }%
C\in\mathbb{R}_{\nless}^{p\times q}\text{, and }b\in\mathbb{R}^{p}).
\]

\item Consider the optimization problem:\medskip
\end{itemize}

\textit{Problem LP}$_{1}$:\medskip%

\begin{tabular}
[c]{l}%
\ \ \
\end{tabular}
$\left|
\begin{tabular}
[c]{ll}%
$\text{Minimize:}$ & $\alpha^{T}x$\\
& \\
$\text{Subject To:}$ & $(x,y)\in L;$ $\ y\in Y$\\
& \\
\multicolumn{2}{l}{(where $\alpha\in\mathbb{R}^{p}).$}%
\end{tabular}
\text{ \ }\right.  \medskip$

\begin{itemize}
\item \textit{Problem LP}$_{1}$ is equivalent to the smaller linear program:\medskip
\end{itemize}

\textit{Problem LP}$_{2}$:\medskip\medskip%

\begin{tabular}
[c]{l}%
\ \ \
\end{tabular}
$\left|
\begin{tabular}
[c]{ll}%
$\text{Minimize:}$ & $\left(  \alpha^{T}C\right)  y+\alpha^{T}b$\\
& \\
$\text{Subject To:}$ & $y\in Y$\\
& \\
\multicolumn{2}{l}{(where $\alpha\in\mathbb{R}^{p}).$}%
\end{tabular}
\text{ \ }\right.  \medskip\medskip$

Hence, if $L$ is the graph of a one-to-one correspondence between the points
of $X$ and the points of $Y$ (see Beachy and Blair (2006, pp. 47-59)), then,
the optimization of any linear function of $x$ over $X$ can be done by first
using \textit{Problem LP}$_{\mathit{2}}$ in order to get an optimal $y,$ and
then using Graph $L$ to ``retrieve'' the corresponding $x$. Note that the
second term of the objective function of \textit{Problem LP}$_{\mathit{2}}$
can be ignored in the optimization process of \textit{Problem LP}%
$_{\mathit{2}},$ since that term is a constant.\medskip

Hence, if $L$ is derived from knowledge of the $\mathcal{V}$-representation of
$X$ only (as is the case for the TSP\ LP\ models of Diaby (2007) and Diaby and
Karwan (2012) relative to the \textit{Standard TSP Polytope}), then this would
mean that the $\mathcal{H}$-representation of $X$ is not involved in the
``two-step'' solution process (of using \textit{Problem LP}$_{\mathit{2}}$ and
then Graph $L$), but rather, that only the $\mathcal{V}$-representation of $X$
is involved (see Remark \ref{Alternate_TSP_Polytope_Rmk} (p.
\pageref{Alternate_TSP_Polytope_Rmk}) of this paper). The case of the MSTP can
be used to illustrate this point also. Referring back to Example
\ref{EF_MST_Example}, clearly, it is not possible to get the $\mathcal{H}%
$-\textit{description} $P$ by simple mathematical ``manipulations'' of the
$\mathcal{H}$-\textit{description} $Q\prime$. One could derive $Q$ from
$Q\prime$ \textbf{only} by establishing correspondences which are based on the
knowledge of the $\mathcal{V}$\textit{-descriptions} of $P$ and $Q\prime$. In
other words, although solutions of $P$ can be ``retrieved'' from those of
$Q\prime$, that ``retrieval'' is based on knowledge of the $\mathcal{V}%
$-\textit{descriptions }involved and does not, therefore, impy any meaningful
\textit{extension} relationships between the $\mathcal{H}$%
-descriptions\textit{.}\medskip

Hence, in general, when $G=\mathbf{0}$ in Definition
\ref{Extended_Polytope_Dfn2}$,$ the condition stipulated in that definition
cannot imply (or lead to) \textit{extended formulation} relations which are
meaningful in relating the minimal $\mathcal{H}$-representations of the
polytopes involved (see Example \ref{Indep_Spaces_Illustr} (p.
\pageref{Indep_Spaces_Illustr})\ also). $\ \ \square$
\end{remark}

Direct corollaries of the developments above in this section and in section
\ref{Ill_Definition_Section} are the following.

\begin{corollary}
\label{EF_Insight_Coroll}Let $P$ and $Q$ be (non-empty) polytopes stated in
\textit{overlapping spaces.} Assume w.l.o.g. that the set of the descriptive
variables of $P$ is embedded in the set of the descriptive variables of $Q$.
If $Q$ \textbf{can be} expressed in \textit{independent space} relative to $P
$ (i.e., if all of the constraints involving the variables of $P$ can be
dropped from $Q$ after the variables of $P$ are substituted out of the
objective function of the optimization problem over $Q$), then $Q$ is not (and
cannot be \textit{augmented} into) a \textit{well-defined} (non-degenerate,
non-ambiguous, meaningful) EF of $P$.
\end{corollary}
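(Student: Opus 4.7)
The plan is to reduce the claim to Theorem \ref{Independent_Spaces_Thm} and Corollary \ref{Ill_Dfn_Coroll_1} by first extracting from $Q$ an equivalent reduced polyhedron $Q^{\star}$ that is in \textit{independent space} relative to $P$, and then showing that any apparent EF relation from $Q$ (or from an augmentation of $Q$) to $P$ must be of the ambiguous/degenerate kind flagged in section \ref{Ill_Dfn_Condition_Subsection}.

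First I would define $Q^{\star}$ as follows. By the hypothesis of the corollary, after substituting the variables of $P$ out of the objective function of the linear program over $Q$, every constraint of $Q$ involving the variables of $P$ can be dropped without altering the value of that optimization; let $Q^{\star}$ be the polyhedron obtained this way. By construction, $Q^{\star}$ is described purely in the variables of $Q$ that are not variables of $P$, so the pair $(Q^{\star}, P)$ sits in \textit{independent spaces} in the sense of Definitions \ref{Independent_Spaces_Dfn} and \ref{Independent_Spaces_Alternate_Dfn}, and $Q$ itself is an \textit{augmentation} of $Q^{\star}$ (in the sense of Definition \ref{Polytope_Augmentation_Dfn}) obtained by adjoining the variables of $P$ together with the (now-redundant) linking constraints. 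The same reasoning applies verbatim to any augmentation $Q^{+}$ of $Q$: deleting the rows and columns that are redundant for optimization of linear functions of the variables of $P$ yields some $(Q^{+})^{\star}$ that is again in \textit{independent space} relative to $P$.

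Next I would apply Theorem \ref{Independent_Spaces_Thm} to each of the pairs $(Q^{\star}, P)$ and $((Q^{+})^{\star}, P)$: under Definitions \ref{Extended_Polytope_Dfn} and \ref{Extended_Polytope_Dfn3}, neither $Q^{\star}$ nor $(Q^{+})^{\star}$ can be an EF of $P$, since $\varphi_x$ of each is all of $\mathbb{R}^{p}$ and the ``if and only if'' clause of Definition \ref{Extended_Polytope_Dfn3} fails. If one nevertheless tries to declare $Q$ (or $Q^{+}$) an EF of $P$ via Definition \ref{Extended_Polytope_Dfn2}, using a linear map read off from the linking constraints, one lands squarely in Corollary \ref{Ill_Dfn_Coroll_1}.$ii$: the three EF definitions issue contradictory verdicts, and the verdict flips depending on whether the redundant linking structure is retained or deleted. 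By Theorem \ref{Extended_Formulation_Thm} and Corollary \ref{EF_Corollary}, this phenomenon is generic for polytopes in \textit{independent spaces} --- any two such polytopes can be augmented into mutual ``EFs'' --- so the existence of a Definition \ref{Extended_Polytope_Dfn2}-witness carries no structural content about the relation of the minimal $\mathcal{H}$-descriptions of $Q$ and $P$ (compare Remark \ref{EF_Insight_Rmk1}). This is precisely the \textit{ill-definition} asserted by the corollary.

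The hard part will be the careful bookkeeping between two opposite operations: \textit{augmentation}, which adds redundant variables and constraints, and the reduction to $Q^{\star}$, which strips them away. In particular, when quantifying over arbitrary augmentations $Q^{+}$ of $Q$, one must verify that the reduction step does not ``lose'' any genuinely non-redundant coupling between $P$-variables and the remaining variables --- i.e., that the redundancy of every $P$-involving constraint is inherited along augmentations. Theorem \ref{Overlapg_Polytopes_Thm} enters only by way of contrast: it tells us that in \textit{overlapping spaces} such redundancy is harmless, while the very hypothesis of the corollary forces $Q$ back into the \textit{independent space} regime of Theorems \ref{Independent_Spaces_Thm} and \ref{Extended_Formulation_Thm} where it is fatal, and that asymmetry is what must be articulated precisely.
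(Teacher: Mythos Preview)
Your proposal is correct and follows essentially the same logical thread the paper relies on: the paper presents this corollary with no explicit proof, declaring it a ``direct corollary of the developments above in this section and in section~\ref{Ill_Definition_Section}'' --- i.e., of Theorem~\ref{Independent_Spaces_Thm}, Corollary~\ref{Ill_Dfn_Coroll_1}, Theorem~\ref{Extended_Formulation_Thm}, Corollary~\ref{EF_Corollary}, and Remark~\ref{EF_Insight_Rmk1}. Your plan simply makes that implicit chain explicit by naming the reduced polytope $Q^{\star}$, identifying $Q$ as an augmentation of it, and then invoking exactly those results; the ``hard part'' you flag (checking that redundancy of the $P$-involving constraints is inherited under arbitrary augmentations $Q^{+}$) is not addressed separately in the paper either, so you are if anything being more careful than the original.
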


\begin{corollary}
\label{EF_Insight_Coroll2}\textit{Extended Formulations} developments relating
problem sizes (such as Yannakakis (1991), and Fiorini et al. (2011; 2012), in
particular) are valid/applicable only when the projections involved are
irredundant-component projections.\medskip
\end{corollary}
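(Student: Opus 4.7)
The plan is to deduce the corollary from the earlier results in the paper, specifically Theorem~\ref{Overlapg_Polytopes_Thm}, Theorem~\ref{Extended_Formulation_Thm}, Corollary~\ref{EF_Corollary}, and Corollary~\ref{EF_Insight_Coroll}, by establishing the contrapositive: if the projection underlying an alleged EF relation has a redundant component, then no meaningful size lower bound can be derived from that relation.

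First, I would formalize what ``irredundant-component projection'' means in the setting of Definitions~\ref{Extended_Polytope_Dfn}--\ref{Extended_Polytope_Dfn3}. Given an EF $U=\{(x,w)\in\mathbb{R}^{p+q}:Gx+Hw\leq g\}$ of $X\subseteq\mathbb{R}^{p}$, I call the projection $\varphi_{x}(U)$ irredundant-component if (after minimality in the sense of Definition~\ref{Minimal_Description_Dfn}) no sub-block of the $w$-variables together with its defining constraints can be detached from $U$ without altering $\varphi_{x}(U)$; equivalently, every \textit{class of variables} in $w$ participates non-trivially in a constraint that also involves $x$, so that $G\neq\mathbf{0}$ on every such sub-block.

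Next, I would run the contrapositive. Suppose the projection is not irredundant-component. Then one can partition $w$ as $w=(w_{1},w_{2})$ and, after re-arranging rows, write $U$ as the intersection of a block involving $(x,w_{1})$ with a block $\{w_{2}\in\mathbb{R}^{q_{2}}:H_{2}w_{2}\leq h_{2}\}$ that contains no occurrence of the \textit{classes of variables} used in $x$ or $w_{1}$. By Definition~\ref{Independent_Spaces_Dfn}, the $w_{2}$-block is in \textit{independent space} relative to $X$; so by Theorem~\ref{Extended_Formulation_Thm} and Corollary~\ref{EF_Corollary}, the sub-polytope defined by $H_{2}w_{2}\leq h_{2}$ can be chosen arbitrarily (in particular, arbitrarily large or arbitrarily small in facet count and extension complexity) without affecting $\varphi_{x}(U)$. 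Hence the size of $U$ --- whether measured by number of facets, number of variables, or extension complexity --- carries no information about $X$ beyond what is already encoded in the $(x,w_{1})$-block. Any lower bound argument that uses such an $H_{2}$-block as part of the size of $U$ therefore bounds an arbitrary object rather than anything intrinsic to $X$, which would contradict the very notion of a lower bound on the extension complexity of $X$. This forces the projection to be irredundant-component whenever the EF size inequalities of Yannakakis~(1991) or Fiorini \emph{et al.}~(2011; 2012) are to be applied in a non-degenerate sense, which is the statement of the corollary.

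The main obstacle I expect is the precise formalization in the first step: ``irredundant-component projection'' is informal in the paper, and I must pin it down so that it coincides, in the overlapping-spaces case, with the standard minimality condition of Definition~\ref{Minimal_Description_Dfn} (so that Theorem~\ref{Overlapg_Polytopes_Thm} applies and no spurious restriction is imposed), while in the independent-spaces case it genuinely rules out the pathological augmentations produced by Theorem~\ref{Extended_Formulation_Thm}. Once the definition is pinned down, the rest of the argument is a straightforward invocation of the already-established results.
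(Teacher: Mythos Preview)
Your formalization of ``irredundant-component projection'' points in the wrong direction, and as a result the contrapositive argument does not establish what you need. You take the redundant component to be a detachable sub-block $w_{2}$ of the \emph{auxiliary} variables, and then argue that because the $H_{2}w_{2}\leq h_{2}$ block can be chosen arbitrarily, the size of $U$ ``carries no information about $X$.'' But this is a statement about \emph{padding} an EF, and padding does not invalidate lower-bound results: the Yannakakis/Fiorini-type bounds assert that \emph{every} EF of $X$ has size at least $N$, so exhibiting a padded EF of size larger than $N$ is perfectly consistent with the bound. Your conclusion that ``any lower bound argument that uses such an $H_{2}$-block \ldots\ bounds an arbitrary object'' conflates the size of one particular (non-minimal) EF with the extension complexity of $X$.

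The paper's intended reading---visible from Corollary~\ref{EF_Insight_Coroll} immediately preceding, from the MSTP discussion in Example~\ref{EF_MST_Example}, and from the explicit breakdown of the Fiorini \textit{et al.} proofs in Theorems~\ref{Fiorini_et_al_2011_Thm}--\ref{Fiorini_et_al_2012_Thm}---is that the \emph{target} component $x$ is the one whose redundancy matters. If $x$ is column-redundant in $U$ (so that $U$ can be restated with $G=\mathbf{0}$, i.e., in \textit{independent space} relative to $X$), then by Theorem~\ref{Independent_Spaces_Thm} and Corollary~\ref{Ill_Dfn_Coroll_1} the EF relation itself becomes ill-defined, and the specific proof machinery of Yannakakis and Fiorini \textit{et al.} (which requires $G\neq\mathbf{0}$, as the paper shows by tracing through their ``$E:=A$'' step) ceases to apply. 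The paper offers no formal proof of the corollary beyond declaring it a direct consequence of these observations; if you want to supply one, the argument should run through the redundancy of $x$ in $U$, not through a detachable $w_{2}$-block.
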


\subsection{Application to the Fiorini \textit{et al}. (2011; 2012)
``barriers'' \label{Barriers_Subsection}}

Having addressed the MSTP in section \ref{Min_Span_Tree_SubSection}, we will
return our focus to the TSP in this section. We will illustrate Corollaries
\ref{EF_Insight_Coroll} and \ref{EF_Insight_Coroll2} using the developments in
Fiorini et al. (2011; 2012), by showing that the mathematics in those papers
actually ``breaks down'' as one tries to apply their developments when the
polytopes involved are stated in \textit{independent spaces} (i.e., when
$G=\mathbf{0}$ in Definitions \ref{Extended_Polytope_Dfn},
\ref{Extended_Polytope_Dfn2}, \ref{Extended_Polytope_Dfn3}, respectively). As
we indicated in the Introduction section (section \ref{Introduction_Section}),
in this paper, we are not concerned with the issue of
correctness/incorrectness of any particular LP model that may have been
proposed for NP-Complete problems. Rather, our aim is to show that the
resolution of that issue (of correctness/incorrectness) can be beyond the
scope of EF work under some conditions, such as is the case for the LP models
of Diaby (2007), and Diaby and Karwan (2012), for example. In order to
simplify the discussion, we will focus on the \textit{Standard TSP Polytope},
and use the \textit{Alternate TSP Polytope} discussed in this paper (see
\ref{Alternate_Polytope_Dfn}, p. \pageref{Alternate_Polytope_Dfn}), as well as
the TSP LP models of Diaby (2007) and Diaby and Karwan (2012), respectively,
as illustrations. \medskip

Fiorini \textit{et al}. (2012) is a re-organized and extended version of
Fiorini \textit{et al}. (2011). The key extension is the addition of another
alternate defnition of \textit{extended formulation} (page 96 of Fiorini
\textit{et al}. (2012)) which is recalled in this paper as Definition
\ref{Extended_Polytope_Dfn3}. This new alternate definition is then used to
re-arrange ``section 5'' of Fiorini \textit{et al}. (2011) into ``section 2''
and ``section 3'' of Fiorini \textit{et al}. (2012). Hence, the developments
in ``section 5'' of Fiorini \textit{et al}. (2011) which depended on ``Theorem
4'' of that paper, are ``stand-alones'' (as ``section 3'') in Fiorini
\textit{et al}. (2012), and ``Theorem 4'' in Fiorini \textit{et al}. (2011) is
relabeled as ``Theorem 13'' in Fiorini \textit{et al}. (2012).\medskip

Our discussion of specifics why neither of the two papers are applicable when
relating polytopes in \textit{independent spaces} will be based on the
non-validity of the proofs of ``Theorem 4'' of Fiorini \textit{et al}. (2011)
(which is ``Theorem 13'' of Fiorini \textit{et al}. (2012), as indicated
above), and of ``Theorem 3'' of Fiorini \textit{et al}. (2012) (which is in
``section 3'' of that paper) when $G=\mathbf{0}$ in Definitions
\ref{Extended_Polytope_Dfn}, \ref{Extended_Polytope_Dfn2}, and
\ref{Extended_Polytope_Dfn3}, respectively.

\begin{theorem}
\label{Fiorini_et_al_2011_Thm}Let $W$ $\subset\mathbb{R}^{\xi},$ be the
polytope involved in an arbitrary abstraction of TSP tours. Assume $W$ and the
\textit{Standard TSP Polytope} are expressed in \textit{independent spaces
}(such as is the case for the \textit{Alternate TSP Polytope}, or the
polytopes associated with the LP models of the TSP proposed in Diaby (2007)
and in Diaby and Karwan (2012), respectively)\textit{.} Then, the developments
in Fiorini \textit{et al}. (2011) are not valid (and therefore, not
applicable) for relating the size of $W$ to the size of the \textit{Standard
TSP Polytope}.
\end{theorem}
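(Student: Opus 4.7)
The plan is to reduce the theorem to the status of the key projection lemma used throughout Fiorini \textit{et al}.~(2011) (their ``Theorem~4'') and show that this lemma's hypotheses simply cannot be realized when $W$ and the \textit{Standard TSP Polytope} sit in independent spaces. I would proceed by first isolating, from Fiorini \textit{et al}.~(2011), the exact place where they require a polyhedron $U=\{(x,w):Gx+Hw\le g\}$ whose projection $\varphi_{x}(U)$ onto the $x$-space (the \textit{traditional} $x$-variable space of the \textit{Standard TSP Polytope}) equals that polytope. Their slack-matrix / non-negative rank bookkeeping, and in particular the factorization argument used to lower-bound extension complexity, is predicated on this projection being a (non-trivial) polytope described implicitly through the matrix $G$.

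Next I would argue that, under the hypothesis of the theorem, one is forced into the case $G=\mathbf{0}$. Indeed, since $W$ is stated in a space independent of the \textit{traditional} $x$-variables (this is the case for $AP$, and for the polytopes of Diaby (2007) and Diaby and Karwan (2012), by construction), any minimally-described system relating the two polytopes carries no row involving the $x$-variables with a non-zero coefficient. Invoking Remark~\ref{Conforti_Rmk}, when $G=\mathbf{0}$ the projection formula collapses to
\[
\varphi_{x}(U)=\{x\in\mathbb{R}^{p}:\mathbf{0}\cdot x\le ug\ \text{for all }u\in C_{Q}\},
\]
which is either $\varnothing$ or $\mathbb{R}^{p}$, and hence never equal to the bounded \textit{Standard TSP Polytope}. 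This is precisely Theorem~\ref{Independent_Spaces_Thm}(i); combined with Corollary~\ref{Ill_Dfn_Coroll_1}, it says that the only way to manufacture a projection relation between $W$ and the \textit{Standard TSP Polytope} is to \textit{augment} $W$ with redundant linking constraints. But then Theorem~\ref{Extended_Formulation_Thm} (and its Corollary~\ref{EF_Corollary}) says that \emph{any} two polytopes in independent spaces can be cross-augmented into EFs of each other, rendering such a projection degenerate; Corollary~\ref{EF_Insight_Coroll2} then explicitly forbids size-relating EF arguments in this regime.

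The third step is to point out the concrete failure inside the proof of Fiorini \textit{et al}.'s ``Theorem~4''. The factorization of the slack matrix of the \textit{Standard TSP Polytope} requires identifying the $u$-columns with extreme points that project onto tour-vectors of the \textit{traditional} $x$-variables. When the descriptive variables of $W$ carry no component in the $x$-space at all, the putative ``slack'' entries corresponding to facet-defining inequalities $a^{T}x\le\alpha$ of the \textit{Standard TSP Polytope} cannot be read off from the extreme points of $W$: the required inner products $a^{T}x$ are undefined on $W$, and any assignment of them must come from an external, knowledge-of-the-$\mathcal{V}$-representation ``retrieval'' map of the type described in Remark~\ref{EF_Insight_Rmk1}. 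This map is not encoded in $G$, $H$, or $g$, and hence cannot enter the slack-matrix / non-negative rank bound. I would make this concrete by writing the slack matrix with the row indexed by a facet of the \textit{Standard TSP Polytope} and noting that the corresponding row of the factorization would have to be identically zero, collapsing the rank bound to something trivial.

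The main obstacle, I expect, is the bookkeeping in the last step: extracting from Fiorini \textit{et al}.'s proof the precise equation where the ``$Gx$'' part of the extended system is used to reconstruct $a^{T}x$ on the image side of the projection, and documenting that, with $G=\mathbf{0}$, this reconstruction is empty. Once that is cleanly exhibited, the conclusion follows by combining Theorem~\ref{Independent_Spaces_Thm}, Corollary~\ref{Ill_Dfn_Coroll_1}, and Corollary~\ref{EF_Insight_Coroll2}: there is no well-defined EF relation to which their lower-bound machinery can be applied, so the developments in Fiorini \textit{et al}.~(2011) yield no valid size relation between $W$ and the \textit{Standard TSP Polytope}.
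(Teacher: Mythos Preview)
Your overall strategy---locate ``Theorem~4'' of Fiorini \textit{et al.}~(2011) as the pivotal step and argue that its hypotheses force a non-zero coupling matrix which cannot exist when $W$ and the \textit{Standard TSP Polytope} are in independent spaces---is the same as the paper's. However, the route you take from there is considerably more elaborate than what the paper actually does.

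The paper's proof is a short direct inspection. In the notation of Fiorini \textit{et al.}~(2011), with $Q=\{(x,y):Ex+Fy=g,\ y\in C\}$ playing the role of $W$ and $P=\{x:Ax\le b\}$ the \textit{Standard TSP Polytope}, the independence hypothesis forces $E=\mathbf{0}$. The paper then simply points to the line in Fiorini \textit{et al.}'s own proof of their Theorem~4 where they \emph{set} ``$E:=A$''. If $A\neq\mathbf{0}$ this assignment is incompatible with $E=\mathbf{0}$, so the proof step fails outright; and if $A=\mathbf{0}$ then $P$ is either $\mathbb{R}^{d}$ or empty, hence not a polytope $Conv(V)$, so the hypotheses of their Theorem~4 are vacuous. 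That two-case split is the entire argument.

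By contrast, your third step proposes to unwind the slack-matrix/non-negative-rank factorization, argue that the inner products $a^{T}x$ are ``undefined on $W$'', and conclude that a row of the factorization collapses. This is both more work and somewhat shakier (the issue is not that a row becomes zero but that the factorization identity cannot even be written down without the coupling $E$). You also invoke a chain of internal results (Theorem~\ref{Independent_Spaces_Thm}, Corollary~\ref{Ill_Dfn_Coroll_1}, Theorem~\ref{Extended_Formulation_Thm}, Corollary~\ref{EF_Corollary}, Corollary~\ref{EF_Insight_Coroll2}) to reach a general ill-definition conclusion; the paper does not need any of this here, because it is enough to exhibit the single concrete substitution ``$E:=A$'' that breaks. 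Your ``main obstacle'' paragraph anticipates exactly this bookkeeping difficulty; the resolution is that the relevant place in Fiorini \textit{et al.}'s proof is the explicit assignment $E:=A$, and once you cite that line the argument is immediate.
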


\begin{proof}
Using the terminology and notation of Fiorini \textit{et al.} (2011), the main
results of section 2 of Fiorini \textit{et al.} (2011) are developed in terms
of $Q:=\{(x,y)\in\mathbb{R}^{d+k}$ $\left|  \text{ }Ex+Fy=g,\text{ }y\in
C\right.  \}$ and $P:=\{x\in\mathbb{R}^{d}$ $\left|  \text{ }Ax\leq b\right.
\}.$ \medskip

Note that letting $Q$ (in Fiorini \textit{et al. }(2011)) stand for $W,$ and
$P$ (in Fiorini \textit{et al. }(2011)) stand for the \textit{Standard TSP
Polytope} respectively, $E$ would be equal to $\mathbf{0}$ in the expression
of $Q$. Hence, firstly, assume $E=\mathbf{0}$ in the expression of $Q$ (i.e.,
$Q:=\{(x,y)\in\mathbb{R}^{d+k}$ $\left|  \text{ }\mathbf{0}x+Fy=g,\text{ }y\in
C\right.  \}).$ Then, secondly, consider Theorem 4 of Fiorini \textit{et al.}
(2011) (which is pivotal in that work). We have the following:

\begin{description}
\item $(i)$ If $A\neq\mathbf{0}$ in the expression of $P\mathbf{,}$ then the
proof of the theorem is invalid since that proof requires setting ``$E:=A $''
(see Fiorini \textit{et al.} (2011, p. 7));

\item $(ii)$ If $A=\mathbf{0,}$ then $P:=\{x\in\mathbb{R}^{d}$ $\left|  \text{
}\mathbf{0}x\leq b\right.  \}.$ This implies that either $P=\mathbb{R}^{d}$
(if $b\geq\mathbf{0}$) or $P=\varnothing$ (if $b\ngeq\mathbf{0}$). Hence, $P$
would be either unbounded or empty. Hence, there could not exist a polytope,
$Conv(V),$ such that $P=Conv(V)$ (see Bazaraa \textit{et al}. (2006, pp.
39-49), or Fiorini \textit{et al}. (2011, 16-17), among others). Hence, the
conditions in the statement of Theorem 4 of Fiorini \textit{et al.} (2011)
would be \textit{ill-defined/}impossible.
\end{description}

\noindent Hence, the developments in Fiorini \textit{et al.} (2011) are not
applicable for $W$. \ \ \ \medskip
\end{proof}

\begin{theorem}
\label{Fiorini_et_al_2012_Thm}Let $W$ $\subset\mathbb{R}^{\xi},$ be the
polytope involved in an arbitrary abstraction of TSP tours. Assume $W$ and the
\textit{Standard TSP Polytope} are expressed in \textit{independent spaces}
(such as is the case for the \textit{Alternate TSP Polytope}, or the polytopes
associated with the LP models of the TSP proposed in Diaby (2007) and in Diaby
and Karwan (2012), respectively). Then, the developments in Fiorini \textit{et
al}. (2012) are not valid (and therefore, not applicable) for relating the
size of $W$ to the size of the \textit{Standard TSP Polytope}.
\end{theorem}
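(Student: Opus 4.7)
The plan is to mirror the structure of the proof of Theorem \ref{Fiorini_et_al_2011_Thm}, exploiting the fact that Fiorini \textit{et al.} (2012) is essentially a reorganization and extension of Fiorini \textit{et al.} (2011) in which the pivotal ``Theorem 4'' of the earlier paper is relabeled as ``Theorem 13'' and a new ``Theorem 3'' (based on the \textit{second alternate definition}, Definition \ref{Extended_Polytope_Dfn3}) is added in Section 3. Hence two separate obstructions must be exhibited — one for each of these two pillars of the 2012 ``lower-bound'' machinery — and both must be shown to collapse when $W$ and the \textit{Standard TSP Polytope} live in \textit{independent spaces}.

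First, I would translate the setup using the terminology of Fiorini \textit{et al.} (2012): let $W$ play the role of $Q := \{(x,y) \in \mathbb{R}^{d+k} : Ex + Fy = g, \; y \in C\}$ and let the \textit{Standard TSP Polytope} play the role of $P := \{x \in \mathbb{R}^d : Ax \leq b\}$. By the \textit{independent spaces} hypothesis, the \textit{traditional $x$-variables} of the \textit{Standard TSP Polytope} (Definition \ref{traditional x-variables}) share no \textit{class of variables} with the descriptive variables of $W$, so the coupling matrix $E$ must be the zero matrix. I would then dispose of ``Theorem 13'' of Fiorini \textit{et al.} (2012) by reusing verbatim the dichotomy from the proof of Theorem \ref{Fiorini_et_al_2011_Thm}: if $A \neq \mathbf{0}$ in the description of $P$, then the proof requires the assignment ``$E := A$'', which is inconsistent with $E = \mathbf{0}$; and if $A = \mathbf{0}$, then $P$ equals either $\mathbb{R}^d$ (when $b \geq \mathbf{0}$) or $\varnothing$ (otherwise), so the hypothesis $P = Conv(V)$ is impossible/ill-defined.

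The new piece of work is to dispatch ``Theorem 3'' of Fiorini \textit{et al.} (2012), whose statement and proof rely on Definition \ref{Extended_Polytope_Dfn3}. Here I would invoke Theorem \ref{Independent_Spaces_Thm}$(ii)$ of the present paper, which establishes that when two polytopes are in \textit{independent spaces}, no EF relation can hold between them in the sense of Definition \ref{Extended_Polytope_Dfn3}: the ``if and only if'' condition is necessarily violated because (via the re-statement of $U$ in $\mathbb{R}^{p+q}$ with $G = \mathbf{0}$) there exist $x \notin P$ admitting some $w$ with $(x,w) \in U'$. Consequently, the hypothesis of ``Theorem 3'' of Fiorini \textit{et al.} (2012) — that the polytope associated with $W$ is an \textit{extended formulation} of the \textit{Standard TSP Polytope} in that sense — cannot be satisfied, so the theorem cannot legitimately be invoked to bound the size of $W$ in terms of the \textit{Standard TSP Polytope}. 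Combining this with the obstruction for ``Theorem 13'' yields the claim.

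The main obstacle will be the careful bookkeeping for ``Theorem 3'' of Fiorini \textit{et al.} (2012): one must verify that \emph{every} step in its proof which uses the EF hypothesis actually breaks when $G = \mathbf{0}$, rather than being salvageable by some weaker surrogate — for instance, the mere existence of an affine correspondence between extreme points of the two polytopes, as discussed in Remark \ref{EF_Insight_Rmk1}. The subtlety is that such a correspondence (e.g.\ the one exhibited in Example \ref{Ill_Dfn_Coroll_Illustr} via the matrix $A = \bigl[\begin{smallmatrix} 1 & 1 \\ 0 & 0 \end{smallmatrix}\bigr]$) can superficially look like the linear map of Definition \ref{Extended_Polytope_Dfn2} while simultaneously violating Definitions \ref{Extended_Polytope_Dfn} and \ref{Extended_Polytope_Dfn3}, so one has to pin down exactly which of the three definitions ``Theorem 3'' is actually using and confirm that the case $G = \mathbf{0}$ forces the \textit{ill-definition} diagnosed by Corollary \ref{Ill_Dfn_Coroll_1}. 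Once that verification is in hand, the two obstructions together yield the stated non-applicability.
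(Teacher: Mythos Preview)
Your proposal is correct and follows essentially the same two-part structure as the paper's proof: reduce ``Theorem 13'' of Fiorini \textit{et al.} (2012) to the already-handled ``Theorem 4'' of the 2011 paper, and then separately dispatch ``Theorem 3''. The only minor presentational difference is that for ``Theorem 3'' you invoke Theorem \ref{Independent_Spaces_Thm}$(ii)$ as a black box to argue that the EF \emph{hypothesis} of that theorem cannot be satisfied, whereas the paper instead points directly at the specific biconditional inside the \emph{proof} of ``Theorem 3'' (the displayed equivalence $Ax \leq b \Longleftrightarrow \exists y:\ E^{\leq}x + F^{\leq}y \leq g^{\leq},\ E^{=}x + F^{=}y = g^{=}$) and observes that with $E^{\leq}=E^{=}=\mathbf{0}$ the right-hand side cannot imply $Ax\leq b$; both routes rest on the same underlying observation and yield the same conclusion.
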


\begin{proof}
First, note that ``Theorem 13'' of Fiorini \textit{et al}. (2012, p. 101) is
the same as ``Theorem 4.'' Hence, the proof of Theorem
\ref{Fiorini_et_al_2011_Thm} above is applicable to ``Theorem 13'' of Fiorini
\textit{et al}. (2012). Hence, the developments in Fiorini \textit{et al}.
(2012) that hinge on this result (namely, from ``section 4'' of the paper,
onward) are not applicable to $W$.

Now consider ``Theorem 3'' of Fiorini \textit{et al}. (2012) (section 3, page
99). The proof of this theorem hinges on the statement that (using the
terminology and notation of Fiorini \textit{et al.} (2012)):%
\begin{equation}
Ax\leq b\Longleftrightarrow\exists y:E^{\leq}x+F^{\leq}y\leq g^{\leq},\text{
}E^{=}x+F^{=}y\leq g^{=}. \label{F2012(a)}%
\end{equation}

Now, observe that if $x$ and $y$ do not \textit{overlap},
\[
(\exists y:\mathbf{0}\cdot x+F^{\leq}y\leq g^{\leq},\text{ }\mathbf{0}\cdot
x+F^{=}y\leq g^{=})\text{ cannot imply (}Ax\leq b)\text{ in general.}%
\]

Hence, provided $x$ and $y$ do not \textit{overlap} (i.e., provided $x$ and
$y$ are in \textit{independent spaces}), the ``if and only if'' stipulation of
(\ref{F2012(a)}) cannot be satisfied in general. Hence, Theorem 3 of Fiorini
\textit{et al}. (2012) is not applicable for $W$. \ \ \medskip
\end{proof}

\section{Conclusions\label{Conclusions_Section}}

The developments above in this paper are formalizations of the argument that
the possibility of inferring solutions obtained from any given
\textit{correct} abstraction of a given optimization problem from those of any
other \textit{correct} abstraction of that (same) optimization problem is a
logical necessity, and cannot systematically imply any we\textit{ll-defined}
(non-degenerate, meaningful) \textit{extension} relationships between the
resulting models.\medskip

We would argue that for COP's in general, the paradox in existing
\textit{extended formulations} theory whereby one can \textit{extend} an
exponential-sized model, by \textit{augmenting} it, into a polynomial-sized
model (such as in the case of the \textit{spanning tree polytope}), is due to
the pertinent \textit{extension} relationship being a \textit{ill-defined}
(degenerate, non-meaningful) one\textit{\ }only\textit{. }Indeed, if a given
``extension'' of a model has \textit{row-} and/or \textit{column-redundancies}%
, then there may exist a description of the ``extension'' in question in which
``$G=\mathbf{0}$'' (where $G$ is the matrix of coefficients for the variables
of the ``original'' model, as in Definitions \ref{Extended_Polytope_Dfn},
\ref{Extended_Polytope_Dfn2}, and \ref{Extended_Polytope_Dfn3}). It would be
possible in that case to \textit{substitute }all of the variables of the
``original'' model \textit{out} of the ``extension'' at hand. Hence, in that
case, the ``extension'' at hand would simply be an alternate abstraction of
the problem at hand, stated in \textit{independent space }from the
``original'' model. Hence, fundamentally in that case, there would exist no
\textit{well-defined} (non-degenerate, meaningful) \textit{extension}
relationship between the ``extension'' and the ``original'' model, but only
\textit{ill-defined} ones (including the case of the ``original'' model being
an ``extension'' of the ``extension''), as has been discussed in this paper.\medskip

Hence, overall, we believe that the \textit{ill-definition} condition we have
shown in this paper, with its consequence whereby EF theory is not applicable
when relating models expressed in \textit{independent spaces}, constitutes a
useful step towards a more complete definition of the scope of applicability
for EF's. \ \pagebreak

\end{document}